\newcommand{\degmax}{\operatorname{deg}_{\max}}
\newcommand{\ELDS}{\mathcal{E}_{\mathrm{LDS}}}
\newcommand{\PLDS}{\mathcal{P}_{\mathrm{LDS}}}

\newcommand{\ELCDS}{\mathcal{E}_{\mathrm{LCDS}}}
\newcommand{\PLCDS}{\mathcal{P}_{\mathrm{LCDS}}}

\newcommand{\WLDS}{\mathcal{W}_{\mathrm{LDS}}}
\newcommand{\DLDS}{\mathcal{D}_{\mathrm{LDS}}}

\newcommand{\WLCDS}{\mathcal{W}_{\mathrm{LCDS}}}
\newcommand{\DLCDS}{\mathcal{D}_{\mathrm{LCDS}}}

\documentclass[twoside,leqno,twocolumn]{article}

\usepackage{algorithm}
\usepackage{algpseudocode}
\usepackage{caption}
\usepackage{subcaption}
\usepackage{float}

\algrenewcommand\algorithmicrequire{\textbf{Input:}}
\algrenewcommand\algorithmicensure{\textbf{Output:}}

\usepackage[letterpaper]{geometry}

\usepackage{siamproceedings}

\usepackage[T1]{fontenc}
\usepackage{amsfonts}
\usepackage{graphicx}
\usepackage{epstopdf}
\usepackage{enumitem}
\ifpdf
  \DeclareGraphicsExtensions{.eps,.pdf,.png,.jpg}
\else
  \DeclareGraphicsExtensions{.eps}
\fi

\newsiamremark{remark}{Remark}
\newsiamremark{hypothesis}{Hypothesis}
\crefname{hypothesis}{Hypothesis}{Hypotheses}
\newsiamthm{claim}{Claim}

\usepackage[textwidth=2cm,textsize=footnotesize]{todonotes}

\usepackage{comment}

\usepackage{amsopn}
\usepackage{mathtools}

\algnewcommand\algorithmicparfor{\textbf{par for}}
\algblockdefx{ParFor}{EndParFor}%
  [1]{\algorithmicparfor\ #1\ \algorithmicdo}%
  {\algorithmicend\ \algorithmicfor}%

\algnewcommand\algorithmicpforeach{\textbf{par for each}}
\algblockdefx{ParForEach}{EndParForEach}%
  [1]{\algorithmicpforeach\ #1\ \algorithmicdo}%
  {\algorithmicend\ \algorithmicfor}%

\DeclarePairedDelimiter\ceil{\lceil}{\rceil}

\newlength{\datasettablewidth}
\usepackage{booktabs,tabularx}
\setlist{topsep=2pt,itemsep=1pt,partopsep=2pt,parsep=2pt}
\begin{document}
\setlength{\textfloatsep}{4pt}

\newcommand\relatedversion{}
\renewcommand\relatedversion{\thanks{All code is publicly available at: \protect\url{https://doi.org/10.5281/zenodo.23433263}}} 
\onecolumn
\title{\Large An experimental evaluation of static $k$-center clustering algorithms on graphs%
\thanks{This project has received funding from the European Research Council (ERC) under the European Union's Horizon 2020 research and innovation programme (grant agreement No 947702). All code is publicly available at: \protect\url{https://doi.org/10.5281/zenodo.23433263}}}

\author{Sebastian Forster\thanks{University of Salzburg, Austria (\email{forster@cs.sbg.ac.at})}
    \and Martin Grösbacher\thanks{University of Salzburg, Austria (\email{martin.groesbacher@plus.ac.at})}}
\date{}

\maketitle

\fancyfoot[R]{\scriptsize{Copyright \textcopyright\ 2027\\
Copyright for this paper is retained by authors}}

\begin{center}
\begin{minipage}{0.75\textwidth}
    \begin{abstract} The $k$-center problem is one of the most fundamental clustering problems and has applications in a wide range of areas, including facility location, network design, data mining, and machine learning. Given a set of objects together with a distance measure, the goal is to select $k$ centers such that the maximum distance from any object to its closest center is minimized. The $k$-center problem has been studied extensively from a theoretical perspective, both in metric spaces and on graphs. While there have been experimental studies for $k$-center clustering in metric spaces, a comprehensive experimental comparison of graph-based $k$-center algorithms on large-scale graphs is still lacking. In this paper, we bridge this gap by evaluating a set of algorithms ranging from classical approaches such as Gonzalez's greedy algorithm [TCS 1985], an algorithm based on the framework established by Hochbaum and Shmoys [Math. Oper. Res. 1985], to more recent methods, including algorithms by Thorup [SICOMP 2001] and Abboud et al. [SOSA 2023]. Our study considers both running time, solution quality and the effects of parallelism on a variety of synthetic and real-world datasets. Our goal is to provide a comprehensive comparison of state-of-the-art $k$-center algorithms on graphs and identify the strengths and weaknesses of different approaches in practice.
\end{abstract}
\end{minipage}
\end{center}
\twocolumn
\section{Introduction}
\label{sec:intro}


Analyzing the structure of large-scale networks is a fundamental problem in many fields. One important task is to partition the underlying data into meaningful subgroups. This process is naturally captured by clustering problems. These problems have been extensively studied in data mining and machine learning~\cite{Jain, Reddy, Kogan}. The goal of clustering problems is to group input data together in structures called clusters such that similar data ends up in the same cluster. Among the many clustering objectives, $k$-clustering methods represent each cluster by one of $k$ centers and each data point is assigned to the cluster of its closest center. The $k$-center problem is concerned with selecting its centers such that the maximum distance of every data point to its closest center is minimized. Formally, this is an optimization problem of the following form: Given a metric space $M=(P,\mathrm{dist})$ and an integer $k \geq 1$, the objective is to compute a set of centers $C \subseteq P$ with $|C|=k$ minimizing $f_M(C)=\max_{p\in P}\mathrm{dist}(p,C)$, where $\mathrm{dist}(p,C)=\min_{q\in C}\mathrm{dist}(p,q)$.

This problem is $\mathsf{NP}$-hard~\cite{HsuN79} and thus TCS research generally focuses on finding efficient approximation algorithms~\cite{Gonzalez, HS, Plesnik87, Dyer}. The $k$-center problem is commonly defined on arbitrary metric spaces. In this general setting, often called `metric $k$-center' or `vertex $k$-center', it is usually assumed that the distances are given to the algorithm explicitly as part of the input or that distances can be queried upon request. However, in transportation, communication, social and web network analysis, data is often inherently represented as a graph, where distances correspond to shortest-path lengths rather than explicitly given pairwise distances. Consequently, the graph $k$-center problem has been studied alongside its metric counterpart from theoretical and algorithmic perspectives~\cite{Thorup, Abboud, ForsterNRS26}. Although graph metrics satisfy the triangle inequality and therefore constitute a special case of metric spaces, they introduce a fundamental algorithmic difference. As a result, the practical performance of graph $k$-center algorithms depends not only on the quality of the approximation algorithm itself, but also on how efficiently shortest-path computations are performed. Despite recurring progress of theoretical research on graph $k$-center there is comparatively little empirical work evaluating these algorithms on modern large-scale graph datasets. This paper aims to fill this gap by providing a comprehensive experimental study of state-of-the-art graph $k$-center algorithms on graphs.

\subsection{Related Work}


Note that for the sake of clearer exposition, we report running times under the assumption that the weights are integers in the range $ 1 $ to $ n^c $ for some constant $ c $.
For super-polynomial weights, general techniques for reducing the weight range at the expense of a mild increase in approximation guarantee exist (see~\cite{Abboud} for a brief discussion).
We furthermore employ $ \tilde O (\cdot) $-notation suppresses factors polylogarithmic in $ n $ (the number of nodes).

Classic approximation algorithms for $k$-center have originally considered the ``metric'' version of the problem. 
Both the farthest-point heuristic (discovered independently by Gonzalez~\cite{Gonzalez} and Dyer and Frieze~\cite{Dyer}) and the covering heuristic (discovered independently by Hochbaum and Shmoys~\cite{HS} and Plesník~\cite{Plesnik87}) compute a solution that exceeds the optimal value by a factor of at most $ 2 $ in polynomial time.
This approximation factor is tight unless $ \mathsf{P} = \mathsf{NP} $~\cite{HsuN79}.\footnote{Note that such hardness results directly carry over to the graph setting because every metric space with $ n $ points can be viewed as a complete graph with $ n^2 $ edges.}
In cases where constant-time query access to the distances of the metric space is supported, these two classic algorithms have a running time of $ O (k n) $ and $ \tilde O (k n) $, respectively.\footnote{Otherwise, a quadratic-time preprocessing step is required for reading the whole input.}
Note that any algorithm formulated for the metric setting can readily be applied to graphs after performing an (expensive, $ \Omega (n^2) $-time) all-pairs shortest path computation in a preprocessing step.\footnote{As pointed out by Thorup~\cite{Thorup}, distance approximation algorithms can be used to gain efficiency in such a preprocessing step at the cost of overall accuracy.}
The two classic algorithms can however be directly implemented more efficiently in $ \tilde O (km + n) $ time by employing Dijkstra's algorithm (or any other comparably fast single-source shortest path (SSSP) algorithm) for a graph with $ n $ nodes and $ m $ edges.
It can further be shown that at least $ \Omega (k n) $ distance queries are necessary in the metric setting~\cite{BateniEFHJMW23}, but this lower bound does not apply to the graph setting.

Indeed, Thorup~\cite{Thorup} designed a $2$-approximation algorithm with near-linear running time, i.e., time $ \tilde O (m + n) $.
Thorup in particular exploits the following near-linear time primitive that cannot be implemented as efficiently in other metric spaces: Given a set of source nodes $ S \subseteq V $, compute the distance of every node to its closest node in $ S $ by a straightforward modification of Dijkstra's algorithm.
Another interesting aspect about Thorup's algorithm is that it eliminates the inherent ``chain of sequentiality'' present in the classic algorithms that arises from the iterative choice of the $ k $ centers.

Recently, the graph setting has received further attention.
First, Abboud, Cohen-Addad, Lee, and Manurangsi~\cite{Abboud} showed that the two classic algorithmic approaches mentioned above can be implemented in near-linear time in the graph setting due to an elegant randomization trick; for the Hochbaum-Shmoys algorithm this approach worsens the approximation guarantee by a $ (1 + \epsilon) $-factor for a tuneable parameter~$ \epsilon $ with (inverse) linear influence on the running time.
Second, Forster, Nazari, Rao, and Skarlatos~\cite{ForsterNRS26} observed that the two classic algorithmic approaches can be implemented efficiently by employing a partially dynamic single-source shortest paths data structure.
In particular, the partially dynamic algorithm of Górkiewicz and Karczmarz~\cite{GorkiewiczK25} implies a nearly-linear time implementation  at the cost of worsening the approximation guarantee by a $ (1 + \epsilon) $-factor for a tuneable parameter~$ \epsilon $ with (inverse) linear influence on the running time.
Note that these recent works again exhibit the aforementioned ``chain of sequentiality'' in choosing the centers.

Apart from the study of Kralev et al.~\cite{KralevKAC21}, which compares an exact and an approximation algorithm on randomly generated graphs with at most $150$ nodes, we are not aware of any experimental studies targeting the $k$-center problem in the graph setting.
Existing studies~\cite{MihelicR05,Garcia-DiazMMHS19} for the related metric $k$-center problem (in which the all-pairs distances are given explicitly as part of the input) focus approximation guarantee and have been conducted for moderate input sizes.

\subsection{Research Questions}

As explained above, the $k$-center problem on graphs has become a separate object of study with several potential use cases. Yet, to the best of our knowledge, a systematic study of running time performance is lacking. In this paper, we address this research gap by engineering scalable implementations of existing algorithms and introducing an implementation-oriented simplification of Thorup's algorithm. In doing so, we in particular consider parallelization issues to support large input graphs on modern hardware. All algorithms in this paper were therefore implemented both with sequential and parallel support via parallel SSSP routines.

The main goal of our work is to answer the following questions:

\begin{enumerate}
\item Which algorithms achieve the best running time on different graph classes?
\item How do the solution qualities of the algorithms compare?
\item How well do the algorithms scale with parallelism?
\end{enumerate}

To answer these questions, we perform an extensive experimental evaluation on synthetic random graphs as well as real-world instances.
To keep this study feasible, we (a) limit the scope of our investigations to polynomial-time algorithms with a (close to) optimal multiplicative approximation factor of $2$ and (b) rely on established implementations of single-source shortest path algorithms. Algorithms included in this paper are well known approaches such as the algorithm of Gonzalez \cite{Gonzalez}, a recent improvement over the classic algorithm of Hochbaum \& Shmoys \cite{HS} by Abboud et al.~\cite{Abboud}, an algorithm by Thorup~\cite{Thorup} as well as an $\alpha$-approximate version of Gonzalez' algorithm by Abboud et al.~\cite{Abboud}. All algorithms were implemented in C++ using the graph-based benchmark suite (GBBS) by Dhulipala et al. \cite{GBBS}.

\subsection{Outline}

The rest of this paper is organized as follows:
In section~\ref{sec:preliminaries} we provide preliminaries and discuss our parallel model. Sections~\ref{sec:mis} and \ref{sec:farthest} explain the algorithms and their implementations. Sections~\ref{sec:setup} and~\ref{sec:results} treat our experimental setup and results and section~\ref{sec:conclusion} contains the conclusion.
\section{Preliminaries}
\label{sec:preliminaries}
\subsection{Basic Concepts}
Let $G = (V, E, w)$ be an undirected (weighted) graph with vertex set $V$, edge set $E$, and weight function $w : E \rightarrow \mathbb{N}_{>0}$ assigning a positive integer weight to each edge. Throughout this paper, let $n = |V|$ denote the number of vertices, $m = |E|$ the number of edges and let $N$ denote the highest weight in the graph which we assume to be polynomial in $n$. This assumption is justified by the observation that $N << n$ in massive real-world data sets with a large number of nodes. Furthermore, we define the density of $G$ as $\rho = \frac{m}{n}$. We write $\deg(v)$ to refer to the degree of the vertex $v$, i.e. the number of its incident edges and $\degmax$ to refer to the highest degree of any $v \in V$ of $G$. Let $d(u,v)$ denote the weight of the shortest path between $u$ and $v$ in $G$, sometimes also referred to as the distance from $u$ to $v$. Since the graphs considered in this paper are undirected, $d(u,v)$ is symmetric. The $k$-center problem on graphs is defined as:
\begin{definition}[$k$-Center Problem on Graphs]
    Given an undirected (weighted) graph $G=(V,E,w)$ and an integer $k$, find a set of centers $C \subseteq V$ with $|C| \leq k$ minimizing
    \[
    \max_{v \in V} d(v,C),
    \quad\text{where}\quad
    d(v,C)=\min_{c\in C} d(v,c).
    \]
\end{definition}

The solution to the $k$-center problem returned is often referred to as the radius of the clusters produced. For a vertex $u \in V$, let $d_u$ denote its current distance to the nearest center. During the execution of an algorithm, $d_u$ refers to the value maintained by the algorithm and is updated whenever a closer center is found. We say that an event occurs with high probability (w.h.p.), if it occurs with probability at least $1 - \frac{1}{n^c}$ for any $c \geq 1$.

\subsection{Parallel Model and Primitives}

In this paper, we analyze the running time of parallel algorithms in the work-depth model~\cite{Blelloch96,SandersS24}, which is commonly used to analyze fork-join style parallel algorithms. Work is the total number of operations performed (equivalently, the sequential running time). Depth (span) is the length of the longest chain of sequential dependencies. By Brent's theorem \cite{Brent}, the running time of a parallel algorithm in the work-depth model on $P$ processors satisfies $T_P = O(\frac{W}{P} + D)$ where $W$ denotes the total work, $P$ denotes the number of available processors, $D$ denotes the depth and $T_P$ denotes the running time of the algorithm using $P$ processors. This model is a shared-memory extension of the sequential RAM model and has been commonly used to develop and analyze parallel algorithms. A work-efficient parallel algorithm is one where the work asymptotically matches the best-known sequential time complexity for the problem.

\section{SSSP algorithms}
\subsection{Parallel and Single-Core Routines}
\label{sec:sssp}

In this section we describe the different shortest path computations used by our $k$-center algorithms for distance computation. Many graph-based $k$-center algorithms repeatedly solve the Single-Source Shortest Path (SSSP) problem as a subroutine. Consequently, the overall performance of these algorithms is heavily influenced by the efficiency of the underlying shortest-path computations. Since SSSP is a well-studied problem in graph algorithms, efficient sequential and parallel implementations are readily available.

Depending on whether the graph is weighted or unweighted and whether a sequential or parallel execution is used, we employ four different SSSP routines:

\begin{enumerate}
    \item Parallel, unweighted: \textbf{Parallel Breadth-First Search}
    \item Parallel, weighted: \textbf{$\Delta$-stepping}
    \item Single-core, unweighted: \textbf{Breadth-First Search}
    \item Single-core, weighted: \textbf{Dijkstra}
\end{enumerate}

All four routines are extended by two modifications that are used throughout this paper. First, we employ \emph{lazy updates}. Since all algorithms repeatedly perform SSSP computations from different centers, they maintain a global distance array which for every vertex $v$ stores its current distance $d_v$ to the closest center selected so far. During an SSSP computation, a vertex is only inserted into the SSSP data structure if its distance $d_v$ is improved. Second, we support an optional distance cap. Given a source vertex $u$ and a radius $r$, an edge $(u,v)$ is only relaxed if $d_u + w(u,v) \leq r$.

Breadth-First Search (both sequential and parallel) and Dijkstra's algorithm are standard algorithms whose details are well-known.
For the parallel execution on weighted graphs, we use $\Delta$-stepping~\cite{Meyer}.
We are aware of the existence of other algorithms that use the stepping framework~\cite{Dong21}.
While they show improvements for certain graph types, $\Delta$-stepping still remains competitive and is widely used. Since our experimental dataset contains multiple graph classes, we make vanilla $\Delta$-stepping our algorithm of choice. In the unweighted setting, the effects of parallelization are relatively intuitive and we therefore do not evaluate these explicitly in our study. A review of $\Delta$-stepping and its subsequent notation is contained in~\cref{sec:ds} of the appendix.

More generally, other approaches for sequential and parallel single-source shortest path computation~\cite{Thorup99,AndoniSZ20,Li20,RozhonGHZL22,Rozhon23,Fineman23,DuanMSY23,DuanMMSY25,GrunauKZ25,CastroCF25,DuanMSY26,AshvinkumarBGS26} (a) precompute global structures which adds a certain baseline overhead and seems to not interact well with locality aspects of $k$-center and/or (b) have no publicly available scalable implementations.
\section{Maximal Distance-$r$ Independent Set Algorithms}
\label{sec:mis}
\subsection{Abboud MIS}
\label{sec:abboud}

The first $k$-center clustering algorithm we implemented is based on the well-known $2$-approximation algorithm by Hochbaum and Shmoys~\cite{HS}. It takes a graph $G = (V, E, w)$ along with a parameter $k$ as an input and outputs a $2$-approximation to the $k$-center solution by computing the smallest radius $r$ that is sufficient to cover $G$ using at most $k$ centers. A maximal distance-$r$ independent set algorithm is used as a feasibility check for a given $r$ and depending on the output, the radius is increased or decreased according to a binary search. Abboud et al. showed that a maximal distance-$r$ independent set algorithm can be implemented in $O(n \log^2 n + m \log n)$ time in expectation by combining random center selection with lazy Dijkstra updates.\footnote{It should be mentioned, that Abboud et al.\ employ a rounding technique such that each edge weight is an integer between $1$ and $O(\frac{n}{\epsilon})^{O(1)}$ which guarantees polynomially bounded edge weights but causes their solution to lose an approximation factor of $1 + \epsilon$. Since, as mentioned in the preliminaries, we observe that the highest weight in real world datasets is often much smaller than $n$, we do not employ this rounding scheme for any of our algorithms.} Note that their running time is independent of $k$. They key observation is that each vertex and consequently its incident edges are scanned only $O(\log n)$ times in expectation. We now show that the same charging argument remains valid whether we use our Lazy Capped Dijkstra or the Lazy Capped $\Delta$-stepping routine, because it depends only on the sequence of successful distance decreases and not on how the SSSP routine propagates those. Thus, we will now analyze the feasibility check employed by Abboud et al.'s algorithm, which henceforth will simply be called Abboud MIS.

\begin{algorithm}
\caption{\textsc{AbboudFeasibilityCheck}$(G, r)$}\label{alg:mis}
\begin{algorithmic}[1] 

\Require Graph $G=(V,E)$, radius $r$
\Ensure Maximal distance-$r$ independent set $S$

\State $S \gets \emptyset$ \Comment{output set}
\State $T \gets V$ \Comment{remaining center candidates}
\State $d[v] \gets \infty \;\; \forall v \in V$ \Comment{global distance array}

\While{$T \neq \emptyset$}
    \State $v \gets$ random vertex from $T$ \Comment{pick center}
    \State $T \gets T \setminus \{v\}$
    \If{$d[v] > r$} 
        \State $S \gets S \cup \{v\}$ \Comment{add vertex to solution}
        \State \textproc{SSSP Routine}$(G, v, r, d)$ \Comment{update distances}
    \EndIf
\EndWhile

\State \textbf{return} $S$
\end{algorithmic}
\end{algorithm}

As mentioned before, Algorithm~\ref{alg:mis} serves as the feasibility check for the maximal distance-$r$ independent set algorithm by Abboud et al.: Given a radius $r$, it will return the set of centers and therefore also the number of centers needed to cover the whole graph with the given $r$. In the weighted parallel case, it calls Lazy Capped $\Delta$-stepping as a subroutine once per selected center. The following lemma shows that each vertex experiences only $O(\log n)$ successful distance decreases in expectation. Consequently, the exploration work accumulated over all SSSP executions is only an $O(\log n)$ factor larger than the exploration work of a single Lazy Capped $\Delta$-stepping execution, whereas the phase overhead accumulates once per selected center since the SSSP computations are invoked sequentially. Let $S_r$ denote the output set produced by \Cref{alg:mis} on input $(G,r)$ and let $k_r := |S_r|$ denote the number of centers selected by Algorithm $3$ for this input radius. 

The following lemma bounds the expected work and depth of Algorithm~\ref{alg:mis} when Lazy Capped $\Delta$-stepping is used as the underlying SSSP routine.

\begin{lemma}
\label{lem:abboud-feasibility}

Let $k_r$ denote the number of centers selected by
Algorithm~\ref{alg:mis}. When Lazy Capped $\Delta$-stepping is used as the
underlying SSSP routine, the expected work is

\[
O(\ELCDS\log n+k_r\PLCDS).
\]

and the depth is

\[
O(k_r\DLCDS).
\]

\end{lemma}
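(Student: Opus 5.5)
We bound the work of the exploration part and the per-phase overhead of the $\Delta$-stepping calls separately, and obtain the depth from the sequential order of the calls. We use the notation of the appendix review of $\Delta$-stepping (\cref{sec:ds}). There, $\ELCDS$ is the exploration work of one Lazy Capped $\Delta$-stepping run: vertex insertions and edge relaxations, each charged to a successful distance decrease. $\PLCDS$ is the overhead of one run that does not depend on the number of improvements, such as bucket and phase bookkeeping. $\DLCDS$ is the depth of one run. We take as given from the appendix that one run has work $O(\ELCDS+\PLCDS)$ and depth $O(\DLCDS)$, where $\ELCDS$ is at most a constant times $\sum_u (1+\deg(u))$ over the vertices $u$ whose $d_u$ strictly decreases.

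\textbf{Charging argument.} Lazy updates insert a vertex $u$, and scan its edges, only when $d[u]$ strictly improves. We therefore charge all exploration work to pairs $(u, \text{improvement of } d[u])$. The central claim is that each vertex $u$ has $O(\log n)$ successful decreases in expectation over the random order in which \Cref{alg:mis} processes $T$. We follow Abboud et al. Since the routine is exact up to the cap $r$, after the call from center $v$ we have $d[u]\le d(u,v)$ whenever $d(u,v)\le r$. Order all vertices within distance $r$ of $u$ by distance to $u$. When $v$ is selected as a center, $d[u]$ strictly decreases only if $d(u,v)$ is smaller than $d(u,v')$ for every earlier processed $v'$ within distance $r$ of $u$. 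A processed $v'$ either became a center or was skipped because $d[v']\le r$.

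The skipped case is the step I expect to be the main obstacle. If $v'$ was skipped, then some earlier center $c$ satisfies $d(v',c)\le r$, but this does not yield $d(u,c)\le d(u,v')$. Abboud et al. handle this by considering the random permutation of $T$ as a whole. Any vertex that strictly improves $d[u]$ must be a prefix minimum, with respect to $d(u,\cdot)$, among the candidates processed so far; skipped vertices are excluded because they never run an SSSP. I would argue that the improving centers form a subsequence of the left-to-right minima of a uniformly random permutation. The expected number of left-to-right minima is the harmonic number $H_n=O(\log n)$, and ties only reduce the count.

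Crucially, this argument uses only the final exact capped distances after each call. It does not use the order in which $\Delta$-stepping relaxes edges inside a call. Within a single call, however, $\Delta$-stepping may decrease $d[u]$ several times through re-relaxations in later phases. The appendix bound $\ELCDS$ for one run already absorbs these intra-run re-insertions. So I would charge per (center, improved vertex) pair: each pair costs $O(1+\deg(u))$ times the per-run re-relaxation factor already accounted for in $\ELCDS$.

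\textbf{Summing up.} By linearity of expectation, the expected exploration work is $O(\log n)\cdot\ELCDS$. Each of the $k_r$ selected centers additionally pays $O(\PLCDS)$ overhead. The loop over $T$ costs $O(n)$ for the checks $d[v]>r$, and this is dominated by $\ELCDS$. Together these give expected work $O(\ELCDS\log n+k_r\PLCDS)$. For the depth, the $k_r$ SSSP calls run strictly one after another, each with depth $O(\DLCDS)$. The calls for skipped vertices are constant-time checks. This gives depth $O(k_r\DLCDS)$. This bound holds deterministically given $k_r$, even though the skip checks themselves are sequential.
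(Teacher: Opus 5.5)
Your outline is the paper's: charge exploration work to strict decreases of $d[u]$, show $O(\log n)$ expected decreases per vertex, add $\PLCDS$ once per selected center, and multiply the depth by $k_r$. The gap is in the step that carries all of the probabilistic content. You correctly observe that skipped vertices break the naive prefix-minimum claim. You then settle it by asserting that the improving centers form a subsequence of the left-to-right minima of a uniformly random permutation, with skipped vertices excluded. Neither reading of this holds.

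\emph{Processing order of all of $V$.} Here an improving center need not be a prefix minimum of $d(u,\cdot)$. Take a path with edges $cu$ and $uv$, weights $w(c,u)=r\ge 2$ and $w(u,v)=r-1$. The order $c,u,v$ accepts $c$, skips $u$, and accepts $v$ (since $d(c,v)=2r-1>r$). Then $v$ lowers $d[u]$ from $r$ to $r-1$, although $u$ itself, at distance $0$, was processed earlier.

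\emph{Accepted centers only.} Here you no longer have a uniformly random permutation of a fixed set. Which vertices are accepted depends on the order, and even conditioned on the center set the order is biased. Take $V=\{a,b,c\}$ with a single edge $ab$ of weight at most $r$ and $c$ farther than $r$ from both. Conditioning on the center set $\{a,c\}$ makes $a$ precede $c$ with probability $2/3$. So the harmonic bound $H_n$ cannot simply be quoted.

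The missing ingredient is the sequential (deferred-decisions) argument the paper uses. Let $A_i^u$ be the set of vertices still in $T$ before iteration $i$ that, if drawn now, would pass the test $d[v]>r$ and satisfy $d(v,u)\le r$ and $d(v,u)<d[u]$. Then:
\begin{enumerate}
\item $d[u]$ drops in iteration $i$ exactly when the drawn vertex lies in $A_i^u$.
\item Conditioned on this, the drawn vertex is uniform in $A_i^u$, because it is uniform in $T$.
\item $A^u$ never grows, since $T$ and all $d$-values only shrink.
\item After a drop caused by $v$, only vertices strictly closer to $u$ than $v$ remain, so $|A^u|$ halves in expectation.
\end{enumerate}
Lemma~2.1 of Abboud et al.\ then gives $O(\log n)$ expected drops, which recovers the $H_n$ bound you were aiming for. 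This is the paper's $W_i^u$ argument, except that the set must exclude vertices that would be skipped or lie beyond the cap. That is exactly the case you flagged, and the paper leaves it implicit.

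A smaller point: if the $n$ skip checks really run one after another, they contribute $\Theta(n)$ depth, so your last sentence contradicts itself. The stated depth needs the search for the next accepted vertex to be parallelized. The paper's depth bound likewise counts only the SSSP calls.
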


The proof and the implementation details are contained in~\cref{sec:abboudProof} of the appendix.
\subsection{Simplifying Thorup's Algorithm}
\label{sec:thorup}

Thorup's algorithm follows the Hochbaum-Shmoys framework which by binary search finds the smallest $r$ such that a distance-$r$ independent set of size $\leq k$ can be found \cite{Thorup}.
In each iteration, it starts with $U := \emptyset$, the initially empty distance-$r$ independent set, and the set of uncovered nodes $W := V$.
The algorithm then iteratively grows the independent set by repeatedly performing the following steps until $ W $ is empty:
\begin{itemize}
    \item Compute $\delta$, a $(1 \pm \epsilon)$-approximation of $ \delta^* $, the maximum number of elements of $W$ contained in any distance-$r$ ball.
    \item Pick a random subset $R$ of $W$ of size $\tfrac{|W|}{\delta}$.
    \item Compute the subset $T \subseteq R$ of ``isolated'' nodes having no other nodes of $R$ at distance $\leq r$.
    \item Add $T$ to $U$ and remove all nodes from $W$ at distance $\leq r$ to any node in $U$.
\end{itemize}

As a result, Thorup's algorithm produces a $2$-approximation to the $k$-center problem. It is clear that the set $U$ constructed in this way is indeed a maximal distance-$r$ independent set.
The size-estimation step can be done with Cohen's near-linear time size-estimation algorithm~\cite{Cohen97} and Thorup provides implementation details to argue how the rest of each iteration can be performed in near-linear time.

To show that the number of iterations is polylogarithmic in $n$ in expectation, Thorup argues that it takes $O(\log n)$ iterations in expectation to reduce $\delta^*$, the maximum number of elements of $W$ contained in any distance-$r$ ball, by a constant factor. Since $\delta^*$ can undergo only $O(\log n)$ such reductions, this gives an expected bound of $O(\log^2 n)$ on the total number of iterations. While Thorup's algorithm achieves a near-linear running time, one of its key ingredients is Cohen's approximate neighborhood-size estimation data structure. Although elegant from a theoretical perspective, this routine is comparatively involved, introduces polylogarithmic running-time overheads, and, to the best of our knowledge, no publicly available implementation exists. Our goal is therefore to replace this component by a much simpler implementation-friendly alternative while preserving the same asymptotic running time guarantees. The resulting algorithm is considerably easier to implement, requires only repeated shortest-path computations, and has no worse asymptotic running time than Thorup's original algorithm.

\begin{algorithm}
\caption{\textsc{Simpl.ThorupFeasibilityCheck}$(G,r)$}
\label{alg:thorup}
\begin{algorithmic}[1]

\Require Graph $G=(V,E)$, radius $r$
\Ensure Maximal distance-$r$ independent set $U$

\State $U \gets \emptyset$
\State $W \gets V$
\State $\delta \gets (1 + \epsilon)^{\lfloor \log_{1 + \epsilon} n \rfloor + 1}$

\While{$W \neq \emptyset$ \textbf{and} $\delta \geq 1$}
\label{alg:thorupWhile}
    \For{$i \gets 1$ \textbf{to} $c\log n$}

        \State Select a random subset $R \subseteq W$ of size $\max\{1,\lceil |W|/\delta\rceil\}$

        \State $T \gets \{v\in R \mid \Gamma_{\leq r}(v)\cap R=\{v\}\}$

        \State $U \gets U \cup T$

        \State Remove from $W$ all vertices within distance $r$ of $U$

        \If{$W = \emptyset$}
            \State \textbf{return} $U$
        \EndIf

    \EndFor

    \State $\delta \gets \frac{\delta}{1+\epsilon}$

\EndWhile
\label{alg:thorupEndWhile}
\State \textbf{return}
\textproc{Fallback MIS}$(G,r,U,W,d)$

\end{algorithmic}
\end{algorithm}
\vspace{3pt}
Our algorithm replaces Cohen's repeated size-estimation routine by a geometrically decreasing guess of the sampling parameter $\delta$. Initially, we set $\delta \approx n$ and repeatedly decrease it by a factor of $\frac{1}{1+\epsilon}$ after each ``phase'' (by which we denote one iteration of the outer while-loop in lines~\ref{alg:thorupWhile}--\ref{alg:thorupEndWhile} of Algorithm~\ref{alg:thorup}).
For each phase with a fixed value of $ \delta $, we repeat the sampling of nodes for the independent set $ O (\log n) $ times (whereas Thorup samples only once for a specific estimate $ \delta $); we refer to one iteration of the inner for-loop in lines $5$--$13$ in which this happens as a ``round''.

A closely related guessing argument appears in the distributed beeping-model MIS algorithm of Afek et al.~\cite{Afek}. Although their setting differs from ours, they also avoid explicitly determining the relevant neighborhood size by considering geometrically spaced estimates and analyze the iteration in which the current estimate is within a constant factor of the true scale. We adapt this idea directly for the $k$-center problem using the framework established by Thorup.

In our simplification of Thorup's algorithm, most guesses initially overestimate the unknown value $\delta^*$ and therefore need not contribute to the progress of the algorithm. However, these rounds are harmless. Eventually, the guessed value enters the constant-factor estimation regime considered in Thorup's analysis, where he argues that each heavy vertex, i.e., each vertex $v$ with $|\Gamma_{\leq r}(v) \cap W| \geq 0.6\delta^*$ where $\Gamma_{\leq r}(v)$ is the $r$-neighborhood of $v$, is eliminated from $W$ with constant probability. 
Thus, in expectation a constant fraction of heavy nodes is eliminated in each round of such a phase and after $ O (\log n) $ rounds (the length of each phase), each heavy vertex is eliminated with high probability, which implies that $\delta^*$ is reduced by a constant factor in this phase.
We can summarize this guarantee as follows.

\begin{lemma}\label{lm:Thorup main lemma}
    For some constants $ c $ and $ \epsilon $ and any fixed $ \delta \geq 1 $, the relationship $ \delta^* < \delta \leq (1 + \epsilon) \delta^* $ can hold for at most $ c \log n $ consecutive rounds of Algorithm~\ref{alg:thorup} with high probability.
\end{lemma}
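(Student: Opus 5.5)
We follow Thorup's per-round analysis and then amplify over the $c\log n$ rounds of a phase. Fix a round in which $\delta^* < \delta \le (1+\epsilon)\delta^*$, where $\delta^*$ is the maximum of $|\Gamma_{\leq r}(u)\cap W|$ over the current $W$. Call a vertex $v\in W$ \emph{heavy} if $|\Gamma_{\leq r}(v)\cap W|\ge 0.6\,\delta^*$. The aim is a constant $p>0$ (depending only on $\epsilon$) such that every heavy $v$ is removed from $W$ in this round with probability at least $p$, regardless of the history. Since $W$ only shrinks, $\delta^*$ is non-increasing. Suppose the relationship $\delta^*<\delta\le(1+\epsilon)\delta^*$ persists for $c\log n$ consecutive rounds. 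Then $\delta^*$ stays above $\delta/(1+\epsilon)$ throughout, so it is reduced by less than a factor $1+\epsilon$. Hence some vertex $u$ satisfies $|\Gamma_{\leq r}(u)\cap W|>\delta/(1+\epsilon)\ge 0.6\,\delta^*_{\text{start}}$ at every one of those rounds, for $\epsilon$ small. Note that $0.6$ here is measured against the current $\delta^*$, which is at least $\delta/(1+\epsilon)$. So $u$ is heavy in every round and survives each round with probability at most $1-p$. A union bound over at most $n$ candidate vertices $u$ then gives failure probability at most $n(1-p)^{c\log n}\le n^{-c'}$ for $c$ large.

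For the per-round bound, fix a heavy $v$ and let $B=\Gamma_{\leq r}(v)\cap W$, so $0.6\,\delta^*\le|B|\le\delta^*$. The sample $R$ has size about $|W|/\delta$, so each element of $W$ lies in $R$ with probability $q\approx 1/\delta\in[1/((1+\epsilon)\delta^*),1/\delta^*)$. Working with independent inclusion at rate $q$ is the easiest route; the exact-size sample is negatively correlated and changes only constants. We show that with constant probability some $u\in B$ lands in $T$. If so, $u$ is added to $U$ and $v$ is removed, because $d(u,v)\le r$. The event $E_u$ ("$u\in R$ and no other element of $\Gamma_{\leq r}(u)\cap W$ is in $R$") has probability at least $q(1-q)^{\delta^*}\ge q\,e^{-O(1)}$, using $|\Gamma_{\leq r}(u)\cap W|\le\delta^*$. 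Summing over $u\in B$ gives $\sum_u\Pr[E_u]\ge 0.6\,\delta^* q\,e^{-O(1)}=\Omega(1)$.

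The main obstacle is turning this sum into a lower bound on $\Pr[\bigcup_u E_u]$. The cleanest approach is to note that $\Pr[|R\cap B|=1]\ge |B|q(1-q)^{|B|-1}=\Omega(1)$. Conditioned on $R\cap B=\{u\}$, we need $u$'s remaining neighborhood $(\Gamma_{\leq r}(u)\cap W)\setminus B$ to avoid $R$. This set has size at most $\delta^*$, and its intersection with $R$ is independent of the event on $B$. It is therefore avoided with probability at least $(1-q)^{\delta^*}\ge e^{-1-o(1)}$. Multiplying gives $p=\Omega(1)$, with constants depending only on $\epsilon$ and the threshold $0.6$. Alternatively, one can use the second-moment (Bonferroni) bound $\Pr[\bigcup E_u]\ge\sum\Pr[E_u]-\sum_{u\neq u'}\Pr[E_u\cap E_{u'}]$, but the conditioning route avoids the pairwise terms. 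The independence of successive rounds' samples from the past, given the current $W$, justifies multiplying the survival probabilities, which completes the plan.
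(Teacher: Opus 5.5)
Your argument is correct for independent sampling (see the caveat below), but it amplifies differently from the paper. The paper fixes the set $H$ of vertices that are heavy when the regime begins. It calls a round good if the surviving part of $H$ shrinks by a factor $\gamma<1$, via Markov's inequality on the shrink factor $\gamma_i$, whose expectation is at most $1-p_0$ by the per-vertex bound it imports from Thorup. It then uses a Chernoff bound to get $\lceil\log_{1/\gamma}n\rceil$ good rounds among $O(\log n)$ rounds, concluding that $\delta^*$ drops below $0.6\,\delta^*_{\text{start}}$.

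You argue contrapositively through a single vertex. The vertex $u$ attaining $\delta^*$ in the last round of the regime has, by monotonicity of $|\Gamma_{\le r}(u)\cap W|$, more than $\delta/(1+\epsilon)$ remaining vertices in its ball in every earlier round. So it survived $c\log n-1$ rounds, each with conditional probability at most $1-p$, and a union bound over $n$ vertices finishes. You should state explicitly that $u$ is the last-round witness; a witness chosen separately per round would not support the union bound. Your route has three advantages:
\begin{itemize}
\item It removes the Markov/Chernoff step and the bookkeeping of vertices that stop being heavy.
\item It does not need the $0.6$ threshold at all: $|B|>\delta/(1+\epsilon)$ already gives $\Pr[|R\cap B|=1]=\Omega(1/(1+\epsilon))$. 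So your version works for every constant $\epsilon$, whereas the paper's last step tacitly needs $0.6(1+\epsilon)\le 1$.
\item You derive the per-round constant yourself rather than citing Thorup.
\end{itemize}
One small repair in that derivation: $(\Gamma_{\le r}(u)\cap W)\setminus B$ excludes $u$, so it has at most $\delta^*-1$ elements. This gives $(1-1/\delta)^{\delta-1}\ge e^{-1}$, whereas your $(1-q)^{\delta^*}$ degenerates when $\delta^*=1$ and $\delta$ is close to $1$.

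You should retract the claim that the exact-size sample of line~6 ``changes only constants.'' When $|W|=O(\delta)$ the sample has $O(1)$ elements, and the per-round bound can fail for a state inside the regime. Take $G$ to be the complement of an $N$-cycle ($N\ge 5$) with unit weights, $r=1$ and $W=V$:
\begin{itemize}
\item Then $\delta^*=N-2$.
\item For $\delta=N-1$, which lies in the regime once $N\ge 2+1/\epsilon$, the sample has exactly two vertices.
\item $T\neq\emptyset$ only if these two are neighbours on the cycle, i.e.\ with probability $2/(N-1)$.
\end{itemize}
A history-independent per-round constant therefore requires independent inclusion with probability $1/\delta$, or a separate argument for small $|W|/\delta$. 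The paper's proof rests on the same per-round bound taken from Thorup and does not address the sampling model either. So this caveat applies equally to both arguments and is not a defect relative to the paper, but the transfer is not automatic.
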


After such a ``successful'' phase, $\delta$ is subsequently decreased geometrically as well and the argument repeats for the new values of $ \delta $ and $ \delta^* $.
The geometric decrease bounds the number of phases by $ O (\log n) $ and by the following invariant we can argue that a successful phase must happen eventually.


\begin{lemma}
\label{lm:thorup}
At the beginning of each phase of Algorithm~\ref{alg:thorup} and immediately after each update of $\delta$, we have $\delta > \delta^*$ with high probability.
\end{lemma}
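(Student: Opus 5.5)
We argue by induction over the sequence of phases, using \Cref{lm:Thorup main lemma} as the engine and the monotonicity of $\delta^*$ as the glue. The key structural fact is that $W$ only shrinks during the algorithm. Since $\delta^* = \max_v |\Gamma_{\leq r}(v)\cap W|$, this means $\delta^*$ is non-increasing over time. Hence, once $\delta > \delta^*$ holds at some moment, it keeps holding for as long as $\delta$ is not changed. So the only moments at which the invariant can break are the updates $\delta \gets \delta/(1+\epsilon)$, and it suffices to check that each such update preserves it.

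\textbf{Base case.} Initially $\delta = (1+\epsilon)^{\lfloor \log_{1+\epsilon} n\rfloor+1} > n \geq |W| \geq \delta^*$. This holds deterministically.

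\textbf{Inductive step.} Suppose that at the start of some phase $\delta > \delta^*$. Consider the $\delta^*$ value at the end of this phase, which we call $\delta^*_{\mathrm{end}}$; it satisfies $\delta^*_{\mathrm{end}} \leq \delta^*$. There are two cases.
\begin{itemize}
\item If $\delta > (1+\epsilon)\delta^*_{\mathrm{end}}$, then after the update $\delta/(1+\epsilon) > \delta^*_{\mathrm{end}}$, and we are done.
\item Otherwise, $\delta^*_{\mathrm{end}} < \delta \leq (1+\epsilon)\delta^*_{\mathrm{end}}$, where the strict left inequality comes from the inductive hypothesis and monotonicity. Because $\delta^*$ only decreases, this "window" condition must then have held during every one of the $c\log n$ rounds of the phase. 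Indeed, at each earlier round $\delta^*$ was at least $\delta^*_{\mathrm{end}}$, so the upper bound $\delta \leq (1+\epsilon)\delta^*$ held, and the lower bound $\delta^* < \delta$ held by monotonicity from the phase start.
\end{itemize}
The second case is exactly the event excluded with high probability by \Cref{lm:Thorup main lemma}: $c\log n$ consecutive rounds with $\delta$ fixed and $\delta^*<\delta\leq(1+\epsilon)\delta^*$. (We may pick the constant $c$ in the algorithm to match the lemma, possibly losing one round.) Hence the update preserves $\delta > \delta^*$, except with probability $n^{-c'}$.

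\textbf{Union bound.} The number of phases is at most $\lfloor\log_{1+\epsilon} n\rfloor + 2 = O(\log n)$, since $\delta$ starts near $n$ and the loop stops when $\delta<1$. A union bound over all phases therefore gives failure probability $O(\log n)\cdot n^{-c'}$. This is still high probability after adjusting the constant $c$.

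\textbf{Main obstacle.} The delicate point is applying \Cref{lm:Thorup main lemma} to the rounds of a phase. That lemma concerns rounds during which the window condition holds, whereas $\delta^*$ changes over the phase. The argument above resolves this by using the end-of-phase value and monotonicity to show the condition held throughout. One must also make sure that conditioning on the history at the phase start does not invalidate the high-probability statement. I would handle this by noting that \Cref{lm:Thorup main lemma} holds for any fixed $\delta$ and any current $W$, since the sampling in each round is fresh.
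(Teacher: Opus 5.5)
Your proof is correct and follows the paper's argument. The ingredients are the same: induction over phases with the deterministic base case $\delta > n \geq \delta^*$, monotonicity of $\delta^*$ as $W$ shrinks, an appeal to Lemma~\ref{lm:Thorup main lemma} when $\delta$ lies in the $(1+\epsilon)$-window, and a union bound over the $O(\log n)$ phases. The only difference is cosmetic: you split cases on the end-of-phase value $\delta^*_{\mathrm{end}}$, so the bad event is literally ``the window held throughout all $c\log n$ rounds,'' whereas the paper splits on the start-of-phase value $\delta_i^*$ and argues that the window can only be left by $\delta^*$ dropping below $\delta_i/(1+\epsilon)$.
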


\begin{theorem}
\label{thm:thorup}
Algorithm~\ref{alg:thorup} terminates after $O(\log^2 n)$ rounds with high probability.
\end{theorem}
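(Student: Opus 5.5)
The plan is to bound the number of phases by the geometric decrease of $\delta$, and to show that each phase consists of at most $c\log n$ rounds. Since every phase executes at most $c\log n$ rounds of the inner for-loop by construction, it suffices to bound the number of phases by $O(\log n)$. Here $\delta$ starts at $(1+\epsilon)^{\lfloor \log_{1+\epsilon} n\rfloor+1} \le (1+\epsilon)n$. It is divided by $1+\epsilon$ after every phase, and the while-loop stops once $\delta<1$. Hence at most $\lfloor\log_{1+\epsilon} n\rfloor+2 = O(\log n)$ phases are ever executed for constant $\epsilon$. This gives the $O(\log^2 n)$ bound on rounds inside the while-loop deterministically.

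What remains is to show that the algorithm actually returns from inside the loop, that is, $W$ becomes empty before $\delta$ drops below $1$, rather than falling through to \textproc{Fallback MIS}. For this I would use \Cref{lm:thorup}: with high probability $\delta>\delta^*$ holds at the start of every phase and after every update. Suppose $W\neq\emptyset$ at the moment $\delta$ is updated to a value below $1$. Then $\delta^*\ge 1$, because any $w\in W$ lies in its own $r$-ball. This contradicts $\delta>\delta^*$ after the update. Therefore, with high probability, $W$ is emptied while $\delta\ge 1$, and the total round count is at most $c\log n\cdot O(\log n)=O(\log^2 n)$.

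For the high-probability accounting, \Cref{lm:thorup} and \Cref{lm:Thorup main lemma} (through which \Cref{lm:thorup} is presumably established) are each applied $O(\log n)$ times, once per phase. A union bound over these $O(\log n)$ events keeps the failure probability at $O(\log n)/n^{c'}$, which is still $1/n^{c''}$ after adjusting the constant $c$ in the round count.

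The main obstacle is making sure the invariant of \Cref{lm:thorup} is applied consistently. The danger is a phase in which $\delta$ is not yet within a $(1+\epsilon)$ factor of $\delta^*$, or one in which $\delta^*$ drops by more than a factor $1+\epsilon$ in a single phase. Neither case hurts: the argument only needs $\delta>\delta^*$ at all times, and overestimates merely waste rounds already counted in the $O(\log n)$ phases.
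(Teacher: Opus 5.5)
Your proposal is correct and follows the paper's proof essentially verbatim. The geometric decrease of $\delta$ bounds the number of phases by $O(\log_{1+\epsilon} n)$, each phase has $O(\log n)$ rounds, and Lemma~\ref{lm:thorup} together with $\delta^* \geq 1$ whenever $W \neq \emptyset$ shows that, with high probability, $W$ is emptied before $\delta$ drops below $1$, so the fallback is never invoked; your extra union bound over phases is harmless but already contained in the paper's proof of Lemma~\ref{lm:thorup}.
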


The proofs of Lemmas~\ref{lm:Thorup main lemma} and~\ref{lm:thorup} and Theorem~\ref{thm:thorup} as well as the implementation details are contained in~\cref{sec:thorupProofs} of the appendix. To summarize, our simplified algorithm replaces the size-estimation routine by a geometrically decreasing guess of $\delta$. By \Cref{thm:thorup}, one feasibility check performs $O(\log^2 n)$ rounds w.h.p. Each round uses $O(\log n)$ invocations of the underlying Lazy Capped SSSP routine which arise from Thorup's step A.2.3 to identify centers and then remove covered vertices within distance $r$. Thus, one feasibility check takes time $O(\log^3 n \cdot (m + n\log n))$ w.h.p. in the sequential setting using Lazy Capped Dijkstra. Using Lazy Capped $\Delta$-stepping in the parallel setting, one feasibility check has work $W = O\!\left(\log^3 n\, (\ELCDS+\PLCDS)\right)$ and depth $D = O\!\left(\log^3 n\, \DLCDS\right)$ w.h.p. In our experimental evaluation, we use a fixed number of four rounds per phase with $\epsilon = 1$, as motivated by the parameter study in~\cref{par:thorup}. While this configuration improves practical performance, it does not satisfy the \(O(\log n)\) rounds-per-phase assumption required for our high-probability termination guarantee. Correctness is nevertheless preserved by the fallback algorithm.

\section{Farthest-Point Selection Algorithms}
\label{sec:farthest}
\subsection{Gonzalez}
\label{sec:gonzalez}

The algorithm of Gonzalez is arguably one of the most well known approaches to the $k$-center problem \cite{Gonzalez}. A similar approach was developed independently by Dyer and Frieze around the same time \cite{Dyer}. The algorithm is based on farthest-point selection: Given a graph $G = (V, E, w)$ as an input, it outputs a $2$-approximation to the $k$-center solution by picking an arbitrary first center $u$. Then it continues by computing an SSSP solution from $u$ which updates the distances in the graph and then selects $v = \arg\max_{x \in V} d_x$ as the next center. This is repeated until $k$ centers have been picked.

\begin{algorithm}

\caption{\textsc{Gonzalez}$(G, k)$}
\label{alg:gonzalez}
\begin{algorithmic}[1]

\Require Graph $G=(V,E)$, number of centers $k$
\Ensure Center set $C$

\State Pick a random vertex $u \in V$
\State $C \gets \{u\}$ \Comment{output center set}
\State $d[v] \gets \infty \;\; \forall v \in V$ \Comment{global distance array}
\State \textproc{SSSP Routine}$(G, u, \infty, d)$ \Comment{initialize distances}

\While{$|C| < k$}
    \State $u \gets \arg\max_{v \in V} d[v]$
    \State $C \gets C \cup \{u\}$
    \State \textproc{SSSP Routine}$(G, u, \infty, d)$ \Comment{update distances}
\EndWhile

\State \textbf{return} $C$

\end{algorithmic}
\end{algorithm}
The algorithm invokes one SSSP computation per selected center, hence the running time of Gonzalez' algorithm on weighted graphs is $O(k(m + n\log n))$ using Dijkstra's algorithm. Since the algorithm always picks the exact farthest vertex, the randomized charging argument from the previous section does not apply. Consequently, the work and depth of the underlying SSSP computations in the parallel setting accumulate over the $k$ iterations. Replacing Lazy Dijkstra by Lazy $\Delta$-stepping therefore yields work $W = O(k(\ELDS+\PLDS))$ and depth $D = O(k\DLDS)$ for graphs with arbitrary positive edge weights. The implementation details are contained in~\cref{sec:GonzalezImp} of the appendix.

We remark that Forster, Nazari, Rao, and Skarlatos~\cite{ForsterNRS26} have recently observed that a dynamic single-source shortest path algorithm can be employed to amortize the cost of the shortest path computations performed for finding the farthest node in each iteration.
In particular, one can add a ``super-source'' to the graph and in each iteration of the Gonzalez algorithm insert an edge from this source to the farthest vertex and then update the distances.
This approach yields the state-of-the-art worst-case complexity for deterministic $k$-center by utilizing the dynamic approximate single-source shortest path algorithm of Górkiewicz and Karczmarz~\cite{GorkiewiczK25} supporting the insertion of edges incident on the source node, at the cost of an additional factor of $ 1 + \epsilon $ in the approximation guarantee.

We are however not aware of any practical implementations of (approximate) shortest path algorithms tailored to the edge-insertions setting.
A natural fully dynamic alternative (supporting also edge deletions, which is not necessary in this scenario), is the algorithm by Ramalingam and Reps \cite{RR} that has often proven to be one of the top contenders in terms of running time performance in experimental evaluations of dynamic shortest path algorithms \cite{Frigioni,Demetrescu}.
However, using this fully dynamic algorithm (and also others) would essentially be equivalent to the Lazy Dijkstra approach in this setting. 
For this reason, we did not explicitly include this approach to our study.
\subsection{$\alpha$-Approximate Gonzalez}
\label{sec:approx-gonzalez}

Another algorithm we implemented is the $\alpha$-approximate Gonzalez algorithm by Abboud et al \cite{Abboud}. Eppstein et al. independently studied approximate greedy permutations for graph metrics, providing another approach to approximate farthest-first traversal~\cite{Eppstein20}. Instead of picking the exact farthest vertex as the next center, the algorithm assigns each vertex $v$ to a bag depending on $d_v$ and selects a random vertex of the highest non-empty bag. The bag structure allows the algorithm to identify an approximately farthest vertex efficiently while supporting lazy distance updates. The size of the bags depends on $\alpha = 1 + \frac{\epsilon}{2}$. This affects the solution by a factor of $\alpha$, meaning that the radius output by this algorithm is a $(2+\epsilon)$-approximation for the above choice of $\alpha$.

\begin{algorithm}

\caption{\textsc{ApproximateGonzalez}$(G, k, \alpha)$}
\label{alg:approx-gonzalez}
\begin{algorithmic}[1]

\Require Graph $G=(V,E)$, number of centers $k$, approximation parameter $\alpha > 1$
\Ensure Center set $C$

\State $R \gets nW$
\State $B \gets \lceil \log_{\alpha} R \rceil$
\State $C \gets \emptyset$ \Comment{output center set}
\State $H_1,\ldots,H_{B-1} \gets \emptyset$, $H_B \gets V$ \Comment{distance bags}
\State $d[v] \gets \infty \;\; \forall v \in V$ \Comment{global distance array}
\State $b_{\max} \gets B$

\While{$|C| < k$}
    \While{$H_{b_{\max}} = \emptyset$}
        \State $b_{\max} \gets b_{\max} - 1$
    \EndWhile
    \State $u \gets$ random vertex from $H_{b_{\max}}$
    \State $C \gets C \cup \{u\}$
    \State $H_{b_{\max}} \gets H_{b_{\max}} \setminus \{u\}$
    \State \textproc{AugmentedSSSP}$(G, u, d, \alpha, H_1,\ldots,H_B)$
\EndWhile

\State \textbf{return} $C$

\end{algorithmic}
\end{algorithm}

The algorithm relies on an augmented SSSP routine that updates both the distance
array and the bags containing the vertices affected. As an example, the
augmented version of Dijkstra's algorithm is contained in~\cref{sec:ApproximateGonzalezImp} of the appendix. Abboud et al. show that this algorithm has running time $O\left((n \log^2 n + m \log n)\frac{\log n}{\epsilon}\right)$ in expectation. The independence of $k$ again follows from the randomized halving argument: each vertex has its distance to the current set of centers improved only $O(\log n)$ times per bag in expectation~\cite{Abboud}. The additional bag maintenance performed after each successful distance decrease requires only constant work per successful distance decrease and therefore does not affect the asymptotic running time. Replacing Lazy Dijkstra with Lazy $\Delta$-stepping changes only the cost of the underlying SSSP computations while leaving the randomized charging argument unchanged. Consequently, in the parallel setting we have expected work $W = O(\ELDS\frac{\log^2 n}{\epsilon} + k\PLDS)$ and depth $D = O(k\DLDS).$ 
During the augmented SSSP calls, vertices that have their distance to the current set of centers improved need to be moved to the correct bag. Note that all vertices are initialized in the last bag (\Cref{alg:approx-gonzalez}, line~4). As a consequence, the first time a vertex $v$ has its distance relaxed, it is removed from $H_B$, not from a bag determined by $\log_\alpha d[y]$ as stated in \Cref{alg:augmented-dijkstra}, line~8. The implementation details are contained in~\cref{sec:ApproximateGonzalezImp} of the appendix.
\section{Experimental Setup}
\label{sec:setup}
\subsection{General Setup}

All algorithms presented in Sections~\ref{sec:mis} and \ref{sec:farthest} were implemented in C++20 as part of the Graph-Based Benchmark Suite \cite{GBBS}. It uses a shared-memory approach to parallel graph processing in which the entire graph is stored in main memory of a single multi-core machine. The code of the library exploits nested parallelism using scheduler-agnostic parallel primitives, such as fork-join and parallel-for loops. The source code of all algorithms evaluated in this paper is publicly available on Zenodo.\footnote{\url{https://doi.org/10.5281/zenodo.23433263}} The code was compiled using g++ 12 through Bazel using optimization level \texttt{-O2} and architecture-specific optimization enabled through \texttt{-march=native}. All experiments were conducted under Debian 12 bookworm on a server equipped with two AMD EPYC 7763 64-core processors and approximately $4$ TB of RAM available, accessed through a virtual machine. Unless stated otherwise, each experiment was repeated three times and the median running time is reported. The timeout for the experimental sweep across all instances was set to three hours per run. \\

We evaluated the algorithms on both weighted and unweighted real-world graph instances originating from various graph collections such as road networks, social networks and other graph types such as graphs representing movie ratings. These datasets have been used extensively in other graph algorithm studies \cite{Laxman}, \cite{Laxman2}, \cite{Pathak}, \cite{Henzinger}. They cover a wide range of graph sizes ranging from approximately one million vertices and three million edges up to over $60$ million vertices and $1.8$ billion edges. In addition, we generated sets of small and large weighted and unweighted synthetic graphs to complement our experiments.\\

For each algorithm we focus mainly on evaluating their running time and the quality of their output solution. As $k$-center on graphs is an NP-hard problem, the solution quality is determined relative to the output of the other algorithms. In addition, we also measure how well the algorithms scale with parallelism. Each algorithm was evaluated in both its sequential and parallel variants on all instances. For shortest-path computations on weighted graphs, the sequential implementations use Lazy Dijkstra or Lazy Capped Dijkstra, whereas the parallel implementations employ Lazy $\Delta$-stepping or Lazy Capped $\Delta$-stepping depending on the algorithm. On unweighted graphs, Lazy BFS, Lazy Capped BFS and their parallel variants are used analogously. Throughout the experimental evaluation we consider values of $k$ up to approximately $\sqrt{n}$. This regime already represents a large number of centers for the considered graph sizes and covers the range typically relevant in practical facility location and clustering applications.
\subsection{Datasets}

For all datasets, we performed a preprocessing step which removes isolated vertices, multi-edges and self-loops. Most instances are already simple graphs such as the instances from SNAP \cite{SNAP} and for those that are not, this type of preprocessing is also a common practice in the literature \cite{Laxman}. For $k$-center in particular, each isolated vertex necessarily forms a singleton cluster and therefore does not affect the remaining optimization problem. In the case of weighted graphs, multi-edges were removed by keeping the edge with the lowest weight. The numbers on graph sizes shown in the tables below were taken after this clean-up procedure. Hence, the numbers may slightly differ from those given by the original sources.

\subsubsection{Synthetic Graphs}

We generated sets of small and large random graphs according to the Erdős–Rényi model for both the weighted and the unweighted setting using our own graph generator provided in the Zenodo archive. The small graphs have $10^5$ vertices, while the large graphs have $10^7$ vertices. The density of these graphs is $\rho = \frac{m}{n}$ for $\rho \in \{2,4,8,16,32,64\}$. For the small instances, we generated $10$ graphs per density. In total, this yields $120$ small and $12$ large instances. The integer weights on the weighted graphs were taken uniformly at random from $[1, 10000]$. The idea was to cover a wide variety of topologies, including sparse graphs with high diameter and dense graphs with low diameter.

\subsubsection{Social Networks}

The first group of real world graphs that we evaluated our algorithms on represent social networks. These graphs were obtained from the Stanford Network Analysis Platform (SNAP) \cite{SNAP}. The main properties of social networks are that they are unweighted, have high average degree and as a result, tend to have low diameter.
\begin{table}[H]
\setlength{\tabcolsep}{5pt}
\begin{tabularx}{\columnwidth}{lrrr}
\toprule
Graph & $n$ & $m$ & $\rho$ \\
\midrule
DBLP        & 317,080    & 1,049,866     & 3.31  \\
YouTube     & 1,134,890  & 2,987,624     & 2.63  \\
LiveJournal & 3,997,962  & 34,681,189    & 8.67  \\
Orkut       & 3,072,441  & 117,185,083   & 38.14 \\
Twitter     & 41,652,230 & 1,202,513,046 & 28.87 \\
Friendster  & 65,608,366 & 1,806,067,135 & 27.53 \\
\bottomrule
\end{tabularx}
\caption{Characteristics of the social network benchmark instances used in our experiments after preprocessing.}
\label{tab:social-networks}
\end{table}

\subsubsection{Road Networks}

We also ran our experiments on the two largest road network graphs available from the $9$th DIMACS challenge \cite{DIMACS}. These graphs are very sparse ($\rho < 2$) and have high diameter as the weights represent physical distances across the US. 
\begin{table}[H]
\setlength{\tabcolsep}{5pt}
\begin{tabularx}{\columnwidth}{lrrr}
\toprule
Graph & $n$ & $m$ & $\rho$ \\
\midrule
USA-central & 14,081,816 & 16,933,413 & 1.2 \\
USA-full    & 23,947,347 & 28,854,312 & 1.2  \\
\bottomrule
\end{tabularx}
\caption{Road network instances from the 9th DIMACS shortest path implementation challenge after preprocessing.}
\label{tab:road_networks}
\end{table}

\subsubsection{Other Networks}

We complement our experiments with three additional rating networks: libimseti from the Network Repository \cite{NR}, MovieLens from GroupLens \cite{Harper}, and Yahoo! Song from KONECT \cite{KONECT}. These graphs represent movie or song ratings and fill the gap between the dense unweighted social networks and the sparse weighted road networks. 
\begin{table}[H]
\setlength{\tabcolsep}{5pt}
\begin{tabularx}{\columnwidth}{lrrr}
\toprule
Graph & $n$ & $m$ & $\rho$ \\
\midrule
libimseti   &   220,970 &  17,233,144 &  77.99   \\
MovieLens   &   414,214 &  33,832,162 &  81.67 \\
Yahoo! Song & 1,625,951 & 256,804,235 & 157.94  \\
\bottomrule
\end{tabularx}
\caption{Rating graph instances after preprocessing.}
\label{tab:other_networks}
\end{table}
Since a higher rating in the context of these graphs is preferable to a lower rating, we transformed the weights accordingly to fit our shortest-path regime: For example, if the highest rating on a specific instance is $5$, then this gets mapped to the best possible edge weight of $1$ under the shortest-path metric. The common property of these graphs is that they are very dense and overall have a low diameter. However, the diameter is still significantly higher than for unweighted instances of similar size.
Before the actual experimental sweep across all instances, we performed pre-tests on various instances regarding the choice of algorithm-specific parameters. These are contained in~\cref{sec:parameters} of the appendix. Further, we also performed parallel scaling experiments on a subset of graphs to observe initial scaling trends and ideally find the best performing thread count for the full experimental sweep. These experiments are discussed in the following section.
\section{Experimental Results}
\label{sec:results}
In this section, we show and discuss our most important results, beginning with the parallel scaling experiments.
\subsection{Parallel Scaling}
\label{sec:parallelscaling}
We conducted experiments to investigate how well our algorithms scale with parallelism. The following plots show the performance of all algorithms on a variety of test instances, comparing their single-core variants with parallel executions using increasing numbers of threads. The results for small and large synthetic graphs are contained in~\cref{sec:parallelsmall} and~\cref{sec:parallellarge} respectively of the appendix.
\subsubsection{Parallelism on Social Networks}
\label{sec:parallelreal}
\begin{figure}
\centering
\begin{minipage}{0.88\linewidth}
\centering
\includegraphics[width=0.9\linewidth]{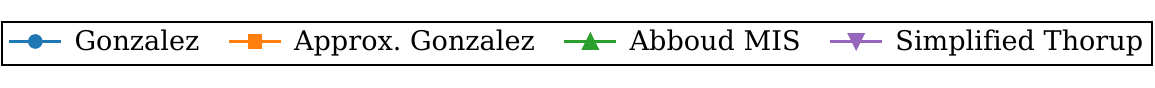}\\

\begin{subfigure}[t]{0.48\linewidth}
\centering
\includegraphics[width=\linewidth]{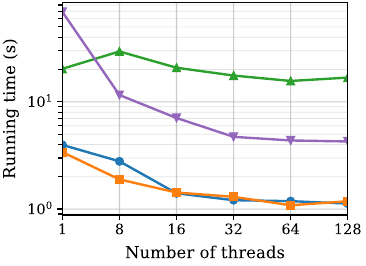}
\caption{$k=200$, LiveJournal}
\end{subfigure}
\hfill
\begin{subfigure}[t]{0.48\linewidth}
\centering
\includegraphics[width=\linewidth]{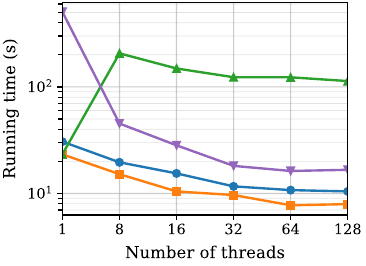}
\caption{$k=2000$, LiveJournal}
\end{subfigure}
\\[-0.15em]

\begin{subfigure}[t]{0.48\linewidth}
\centering
\includegraphics[width=\linewidth]{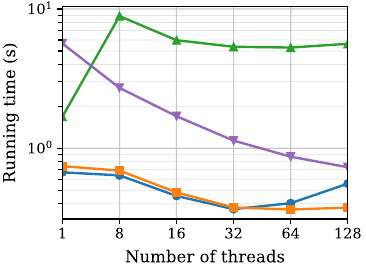}
\caption{$k=200$, YouTube}
\end{subfigure}
\hfill
\begin{subfigure}[t]{0.48\linewidth}
\centering
\includegraphics[width=\linewidth]{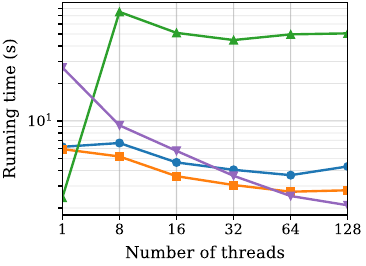}
\caption{$k=2000$, YouTube}
\end{subfigure}
\\[-0.15em]

\begin{subfigure}[t]{0.48\linewidth}
\centering
\includegraphics[width=\linewidth]{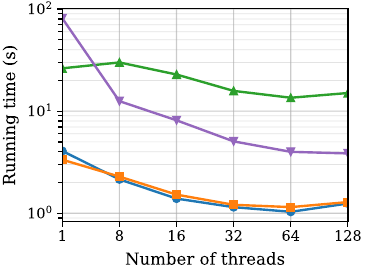}
\caption{$k=200$, Orkut}
\end{subfigure}
\hfill
\begin{subfigure}[t]{0.48\linewidth}
\centering
\includegraphics[width=\linewidth]{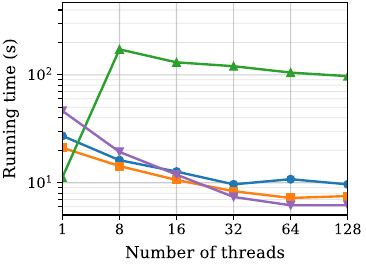}
\caption{$k=2000$, Orkut}
\end{subfigure}
\\[-0.15em]

\caption{Parallel scaling on three social networks. The leftmost point represents single-core execution.}

\label{fig:parallel-scaling-social-networks}
\end{minipage}
\end{figure}

We also evaluated the effects of parallelism on three graphs from our social-network dataset. Since these graphs are unweighted, relatively dense and have low diameter, they complement the previously shown synthetic instances. The results are consistent with our earlier observations. The farthest-point-based algorithms and the simplified Thorup algorithm benefit from parallelism up to approximately $64$ threads, reducing their running times by factors between roughly $2$ and $5$. For Abboud MIS, the effect depends on both $k$ and the density of the graph. YouTube is the sparsest graph in this experiment, and its running time is negatively affected by parallelism even for the smaller value of $k$. LiveJournal and especially Orkut are denser, and parallelism improves the running time by up to a factor of approximately $2$, as shown in Figure~\ref{fig:parallel-scaling-social-networks}, subfigure~(e). For values of $k$ approaching $\sqrt{n}$, however, the parallel execution becomes substantially slower for the reasons discussed above. 

\begin{figure}
\centering
\begin{minipage}{0.88\linewidth}
\centering
\includegraphics[width=0.98\linewidth]{Plots/parallel_scaling_social_networks_legend.pdf}\\

\begin{subfigure}[t]{0.48\linewidth}
\centering
\includegraphics[width=\linewidth]{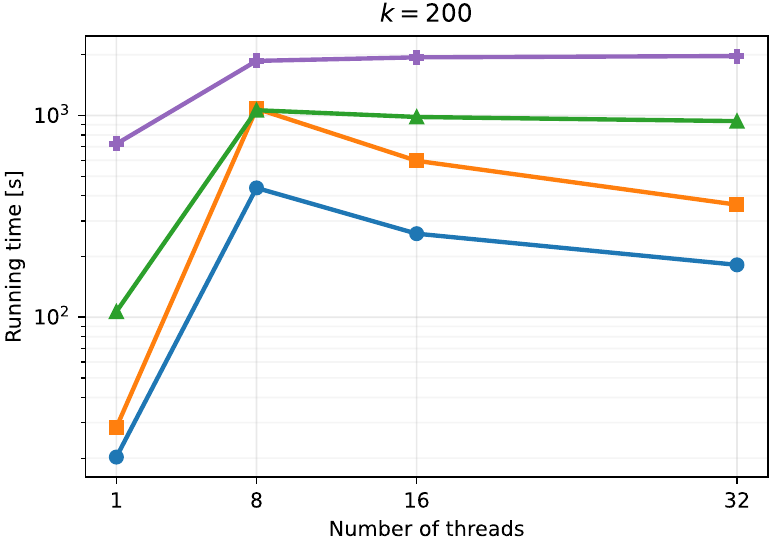}
\caption{$k=200$}
\end{subfigure}
\hfill
\begin{subfigure}[t]{0.48\linewidth}
\centering
\includegraphics[width=\linewidth]{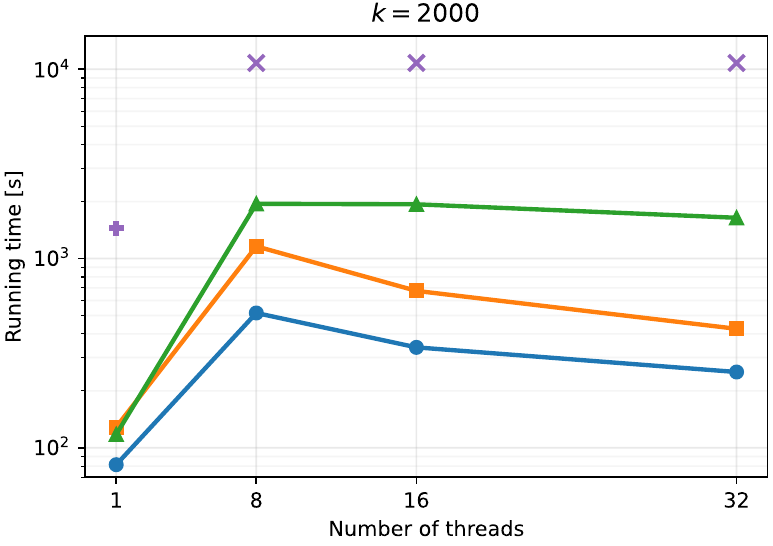}
\caption{$k=2000$}
\end{subfigure}

\caption{Parallel scaling on the USA-central road network. The leftmost point represents the single-core execution. An $\times$ denotes a timeout.}
\label{fig:parallel-scaling-road-network}
\end{minipage}
\end{figure}

Finally, we also examined parallel scaling on a road-network graph. This graph has very low density and high diameter. As shown in Figure~\ref{fig:parallel-scaling-road-network}, no algorithm benefits from parallelism. Although increasing the thread count improves the running time of most parallel executions compared with $8$ threads, these improvements are not sufficient to offset the overhead relative to the single-core implementations.
For the road network, we restrict the dedicated scaling experiment to at most $32$ threads in order to characterize the initial scaling behavior. The full experimental sweep additionally contains executions with $64$ threads for different values of $k$. Although these results are
not directly comparable point by point, they confirm the same general trend: single-core execution remains faster for all algorithms.
To summarize, these experiments have shown that if instances are large enough and have sufficient density, most algorithms significantly benefit from parallelism up to around $64$ threads which is why we chose to use this thread count for our full experimental sweep over all instances for the parallel executions.\footnote{Keep in mind, we are accessing the machine our experiments were conducted on a virtual machine, which might influence these results} Abboud MIS does not benefit from parallelism once $k$ becomes too large. If graphs are too small, parallelism does not pay off for any algorithm. On sparse graphs with high diameter all algorithms generally perform better in single-core mode.
\subsection{Running Time and Solution Quality Results}
\label{sec:RTandSQresults}
In this subsection, we show and discuss the most important results of our full experimental sweep across all instances. The results for synthetic and the real world rating graphs are contained in~\cref{sec:ResultsSmall},~\ref{sec:ResultsLarge} and~\ref{sec:ResultsOther} in the appendix.
\subsubsection{Social Networks}
\begin{figure}
\centering
\begin{minipage}{0.88\linewidth}
\centering

\includegraphics[width=0.90\linewidth]{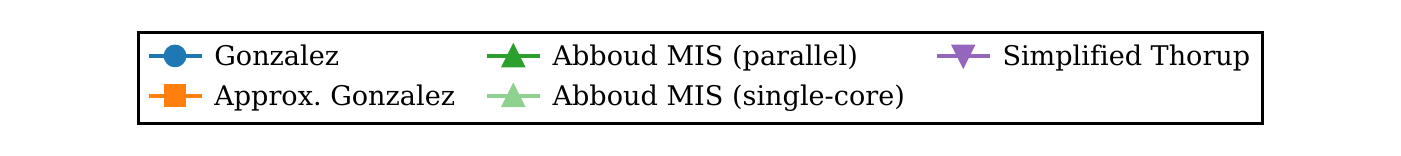}\\[0.1em]

\begin{subfigure}[t]{0.48\linewidth}
\centering
\includegraphics[width=\linewidth]{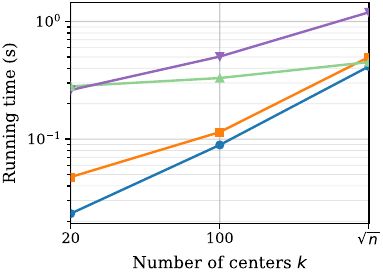}
\caption{DBLP}
\end{subfigure}
\hfill
\begin{subfigure}[t]{0.48\linewidth}
\centering
\includegraphics[width=\linewidth]{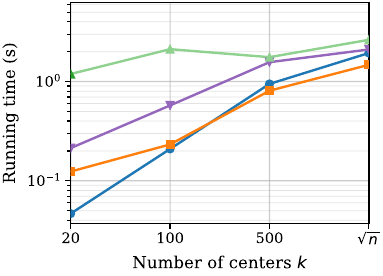}
\caption{YouTube}
\end{subfigure}

\par\vspace{-0.15em}

\begin{subfigure}[t]{0.48\linewidth}
\centering
\includegraphics[width=\linewidth]{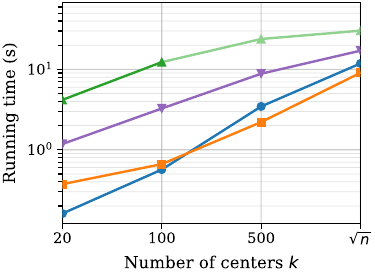}
\caption{LiveJournal}
\end{subfigure}
\hfill
\begin{subfigure}[t]{0.48\linewidth}
\centering
\includegraphics[width=\linewidth]{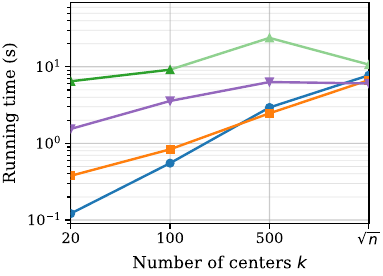}
\caption{Orkut}
\end{subfigure}

\par\vspace{-0.15em}

\begin{subfigure}[t]{0.48\linewidth}
\centering
\includegraphics[width=\linewidth]{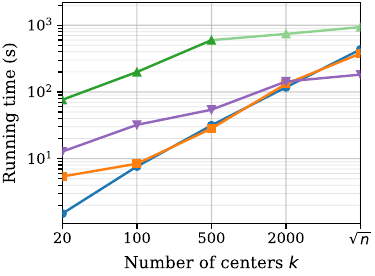}
\caption{Twitter}
\end{subfigure}
\hfill
\begin{subfigure}[t]{0.48\linewidth}
\centering
\includegraphics[width=\linewidth]{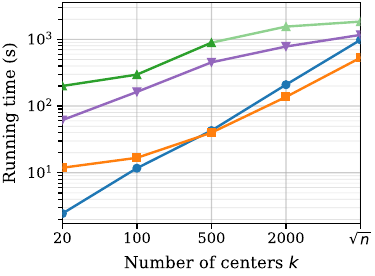}
\caption{Friendster}
\end{subfigure}

\caption{Running times on the social-network instances. All algorithms use parallel execution with $64$ threads except for Abboud MIS, where each point shows the faster result of the parallel (dark) and single-core (light) runs.}
\label{fig:social-full-sweep-runtime}

\end{minipage}
\end{figure}

On the social networks, the MIS based approaches are generally slower for lower values of $k$ than the farthest-point based algorithms but as $k$ approaches $\sqrt{n}$, they catch up. In particular, the Simplified Thorup algorithm outperformed the farthest-point algorithms on Orkut and Twitter. However, it did not outperform them on Friendster, which we believe is due to the fact that the reported radii of all algorithms on this graph was considerably higher. This suggests that this graph has a substantially larger effective diameter, increasing the number of feasibility checks performed by MIS-based algorithms. The quality of solution plots for these networks are contained in~\cref{sec:socQual} in the appendix.
\subsubsection{Road Networks}

\begin{figure}[t]
\centering
\includegraphics[width=0.8\linewidth]{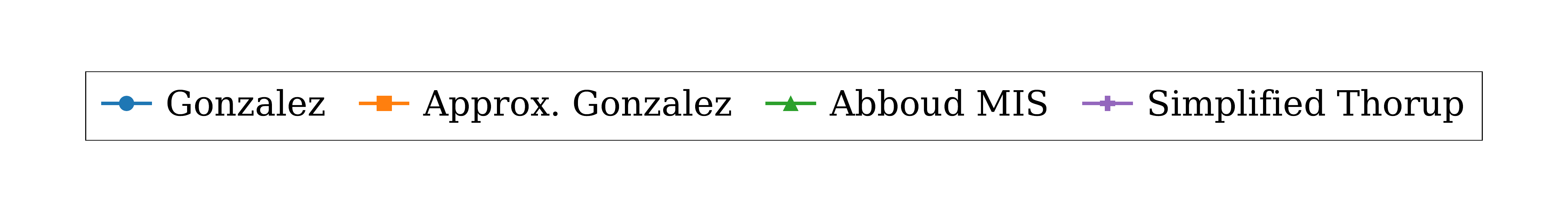}
\vspace{0.5em}
\begin{subfigure}[t]{0.48\linewidth}
\centering
\includegraphics[width=\linewidth]{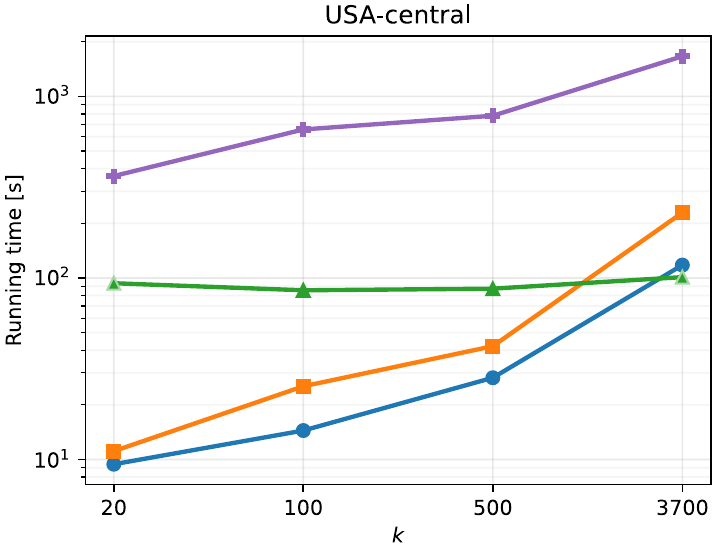}
\caption{USA-central}
\end{subfigure}
\hfill
\begin{subfigure}[t]{0.48\linewidth}
\centering
\includegraphics[width=\linewidth]{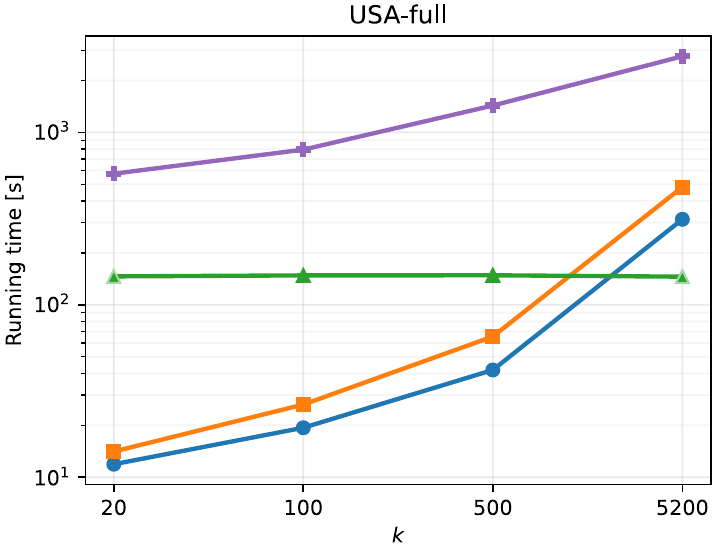}
\caption{USA-full}
\end{subfigure}
\caption{Running times on the two road-network instances. All algorithms use single-core execution.}
\label{fig:road-full-sweep-runtime}
\end{figure}

The running times on the road-network graphs confirm the trends observed in the preceding scaling experiments. All algorithms achieve their best performance in single-core mode. The two farthest-point-based algorithms perform best for smaller values of $k$. As $k$ increases, however, Abboud MIS gradually closes the gap and eventually outperforms them on both instances. Simplified Thorup on the other hand remains substantially slower throughout. This is consistent with our earlier observation that weighted graphs with high diameter require many feasibility checks, making the additional phases of Simplified Thorup dominate the running time.

\begin{figure}[t]
\centering
\includegraphics[width=0.8\linewidth]{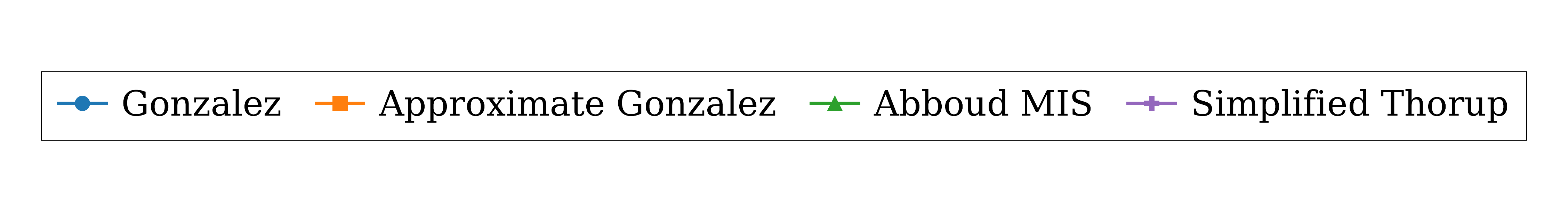}
\vspace{0.5em}
\begin{subfigure}[t]{0.48\linewidth}
    \centering
    \includegraphics[width=\linewidth]{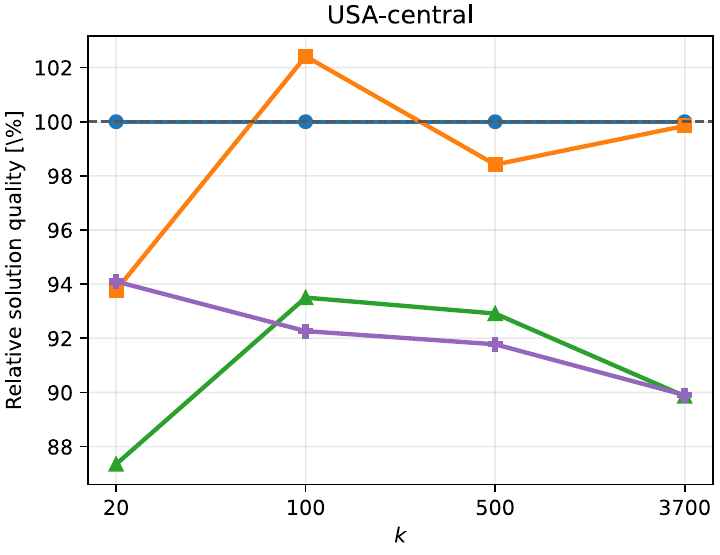}
    \caption{USA-central}
\end{subfigure}
\hfill
\begin{subfigure}[t]{0.48\linewidth}
    \centering
    \includegraphics[width=\linewidth]{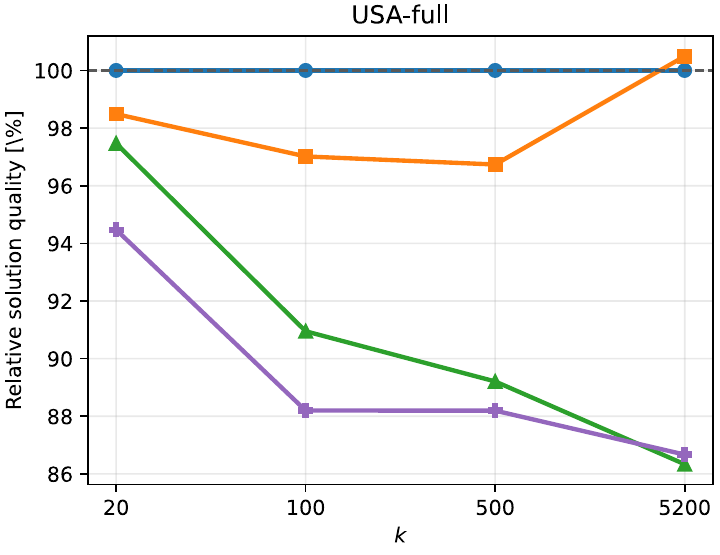}
    \caption{USA-full}
\end{subfigure}
\caption{Solution quality on the two road-network instances. The $k$-center radius returned by each algorithm is reported relative to the one returned by Gonzalez for the same graph and value of $k$.}
\label{fig:road-full-sweep-quality}
\end{figure}

In terms of solution quality, the MIS-based algorithms clearly emerge as the strongest performers on the road networks. Across all tested values of $k$, they consistently produce solutions that are approximately $10\%$ to $12\%$ better than the Gonzalez baseline. Confirming the parameter study, Approximate Gonzalez also generally outperforms vanilla Gonzalez on these instances. This suggests that, on sparse networks with high diameter, introducing randomization into the center selection process can lead to better practical solutions than always choosing the exact farthest vertex.
\section{Conclusion}
\label{sec:conclusion}

In this paper, we presented an experimental study of static algorithms for the $k$-center problem on graphs. We implemented four algorithms, including both classical approaches such as Gonzalez, an algorithm by Thorup as well as more recent ones like an improved variant of Hochbaum \& Shmoys by Abboud et al., and an $\alpha$-approximate variant of Gonzalez. All algorithms were implemented as part of the Graph-Based Benchmark Suite and share the same optimized shortest-path routines, allowing for a fair comparison of their practical performance.

Our experiments demonstrate that no single algorithm is universally superior. The farthest-point-based algorithms consistently achieve the best running times on weighted graphs unless $k$ is very large and the instances are very sparse. 
Meanwhile, the MIS-based approaches become competitive or even outperform them on such sparse weighted graphs and dense unweighted graphs for larger values of $k$. It is worth mentioning that in practice, the MIS-based algorithms' running time performance is heavily dependent on the realization of the binary search. In practical applications with knowledge about the dataset, a heuristic approach for the initial radius guess might already yield significant improvements. The study further shows that graph density, edge weights, diameter, and the choice of $k$ all have a substantial influence on practical performance. Likewise, parallel execution is highly beneficial on sufficiently large instances as long as they are not particularly sparse, but the achievable speedups strongly depend on the underlying algorithm and graph topology. In terms of solution quality, all algorithms reported similar solutions in absolute terms on unweighted instances. On the weighted instances however, the MIS algorithms generally reported better solutions than the farthest-point algorithms with differences in relative solution quality ranging from approximately $10\%$ to $12\%$ on road networks.

Overall, our study provides both practical implementations and an extensive experimental comparison of modern graph-based $k$-center algorithms. We hope that the implementation techniques and empirical insights presented in this paper will facilitate future work on scalable clustering algorithms for large-scale graphs.
\newpage

\appendix
\section{Appendix} 
\subsection{$\Delta$-stepping}
\label{sec:ds}

In this section we explain the simple parallel $\Delta$-stepping algorithm used by our algorithms to perform parallel shortest path computations on weighted graphs \cite{Meyer}. As mentioned previously, it receives two modifications: Lazy updates and an optional distance cap. Before discussing the effects of these modifications, let us first recall the running time of $\Delta$-stepping without modifications. The parallel bounds stated below are based on the analysis of Meyer and Sanders~\cite{Meyer}, which assumes an arbitrary CRCW PRAM. In this model, concurrent shared-memory accesses take constant time and an arbitrary write succeeds in the case of a conflict. For $\Delta$-stepping, competing relaxation requests for the same vertex are first
combined by selecting the minimum proposed distance.

\begin{algorithm}
\caption{\textsc{$\Delta$-stepping}$(G, s, d)$}
\label{alg:lds}
\begin{algorithmic}[1] 

\Require Graph $G=(V,E)$, starting vertex $s$, array of distances $d$
\Ensure $d[v]$ stores the distance from $s$ to $v$ for every reachable vertex $v \in V$

\State $d[v] \leftarrow \infty \: \forall v \in V$
\State $d[s] \leftarrow 0$
\State $B_0 \leftarrow$ bucket with single vertex $s$
\For{$i$ from $0$ to $\frac{L}{\Delta}$} 
    \State $R := \emptyset$ \Comment{Set of vertices removed from the current bucket}
    \While{$B_i \neq \emptyset$}
        \State $S \leftarrow B_i$ \Comment{Snapshot of current bucket}
        \State $B_i \leftarrow \emptyset$
        \State $R \leftarrow R \cup S$
        \ParForEach{light edge $(u,v) \in E$ with $u \in S$ and $w(u,v) \leq \Delta$}
            \If{$d_v > d_u + w(u,v)$}
                \State $d_v \leftarrow d_u + w(u,v)$
                \State Move $v$ into new bucket if necessary
                    \Comment{$v$ may (re)enter $B_i$}
            \EndIf
        \EndParForEach
    \EndWhile
    \ParForEach{heavy edge $(u,v) \in E$ with $u \in R$ and $w(u,v) > \Delta$}
        \If{$d_v > d_u + w(u,v)$}
            \State $d_v \leftarrow d_u + w(u,v)$
            \State Move $v$ into new bucket if necessary
                \Comment{$v$ may not (re)enter $B_i$}
        \EndIf
    \EndParForEach
\EndFor

\end{algorithmic}
\end{algorithm}

\begin{theorem}[Meyer and Sanders~\cite{Meyer}]
\label{thm:delta-stepping}

On graphs with arbitrary positive edge weights, simple parallel $\Delta$-stepping has work

\[
W =
O\!\left(
n + m + n_\Delta + m_\Delta
+ \frac{L}{\Delta}\,
\degmax\,\ell_\Delta \log n
\right)
\]

and depth

\[
D =
O\!\left(
\frac{L}{\Delta}\,
\degmax\,\ell_\Delta \log n
\right)
\]

with high probability on an arbitrary CRCW PRAM.

\end{theorem}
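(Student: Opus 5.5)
This bound is due to Meyer and Sanders, so the plan reconstructs their analysis rather than deriving anything new. Here $L$ is the maximum shortest-path distance from $s$ and $\ell_\Delta$ is the maximum number of edges on a path of weight at most $\Delta$. The quantities $n_\Delta$ and $m_\Delta$ count node pairs (respectively, node pairs connected by an edge) whose distance is at most $\Delta$.

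The argument splits the execution into $L/\Delta$ bucket phases. It bounds the number of inner light-edge iterations per phase, and then charges the work of each iteration to re-insertions.

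\textbf{Step 1: bounding the inner iterations.} Fix bucket $B_i$. Every vertex $v$ settled in this bucket has $d(s,v)\in[i\Delta,(i+1)\Delta)$. Consider its shortest path $P$ and take the first vertex $u$ on $P$ that lies in $B_i$. The subpath from $u$ to $v$ has weight less than $\Delta$, so it uses only light edges and has at most $\ell_\Delta$ edges. By induction on the position along this subpath, after the $j$-th light-relaxation iteration of the phase the $j$-th vertex of the subpath holds its final distance. Hence the while-loop terminates after at most $\ell_\Delta+1$ iterations. Summed over phases this gives $O(\frac{L}{\Delta}\ell_\Delta)$ iterations.

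\textbf{Step 2: bounding re-insertions.} A vertex $v$ is re-inserted into $B_i$ only if its tentative distance was improved after it had already been scanned in this phase. Any such improvement comes along a path of weight less than $\Delta$ from a vertex scanned in the same phase. Charging each rescan of $v$ to a distinct pair $(u,v)$ with $d(u,v)\le\Delta$ shows the total number of rescans is at most $n_\Delta$. The same charging applied to edges shows the total number of extra edge relaxations is at most $m_\Delta$. Each vertex has its heavy edges relaxed exactly once, when it is finally settled, contributing $O(m)$. Together with the base $O(n+m)$, this yields the first terms of $W$.

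\textbf{Step 3: cost of one parallel iteration.} In one iteration, each processor relaxes the edges of one vertex, which takes $O(\degmax)$ time. Competing requests for the same target are resolved by minimum-combining, and bucket moves are done with a randomized semisort or load-balancing step. Both cost $O(\log n)$ w.h.p. on an arbitrary CRCW PRAM. Multiplying this per-iteration cost $O(\degmax\log n)$ by the iteration count from Step 1 gives the depth $D$. Adding $D$ to the work from Step 2 accounts for idle processor slots, which gives $W$.

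\textbf{Main obstacle.} The hardest step is Step 3, specifically the w.h.p. guarantee. Bucket maintenance under concurrent inserts and the min-combining of requests must be shown to take only $O(\log n)$ per iteration. I would first try a randomized dart-throwing or semisorting argument with a union bound over all iterations. A secondary subtlety is making the charging in Step 2 injective, so that the combined bound really is $O(n_\Delta+m_\Delta)$.
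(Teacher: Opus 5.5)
\paragraph{Overall assessment.} The paper gives no proof of this theorem; it cites Meyer and Sanders and attributes the $\degmax\log n$ factor to randomized load balancing. Your outline follows their structure: at most $\ell_\Delta$ inner iterations per bucket, reinsertions and re-relaxations charged to $n_\Delta$ and $m_\Delta$, an $O(\degmax\log n)$ w.h.p.\ cost per iteration, and a Brent-type term in the work. Step~1 is correct. The genuine gap is in Step~2, which is the step you called secondary.

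\paragraph{The gap in Step~2.} There are two problems.
\begin{enumerate}
\item Your $m_\Delta$ is misdefined. ``Node pairs connected by an edge whose distance is at most $\Delta$'' gives $m_\Delta\le m$. But every rescan of $v$ re-relaxes all light edges of $v$, so the number of re-relaxations is $\sum_v(\text{rescans of }v)\cdot\deg_{\mathrm{light}}(v)$. A single vertex rescanned $h$ times with $D$ light neighbours already contributes $hD$, which can far exceed $m$. Meyer and Sanders instead define $m_\Delta$ as the number of triples $\langle x,v,w\rangle$ with $\langle x,v\rangle\in C_\Delta$ (that is, $x$ and $v$ are joined by a path of weight at most $\Delta$) and $(v,w)$ light.
\item The injective charging you leave open does not come for free. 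Charging a rescan of $v$ to the vertex whose relaxation improved $v$, or to the origin of the improving light path, is not injective. That vertex can itself be rescanned with a smaller distance and improve $v$ again. For example, take $\Delta=10$, source $u$, an edge $uv$ of weight $9$, an edge $ux$ of weight $5$, and edges $uy$, $yx$, $xv$ of weight $1$. In the phase-synchronous execution, where each iteration relaxes from the distances held at its start, $v$ is scanned with tentative distances $9,6,3$. Both improvements come from relaxing $xv$, and both improving paths start at $u$.
\end{enumerate}

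\paragraph{How to close it.} The fix is already in your Step~1.
\begin{enumerate}
\item For $v$ settled in bucket $i$, let $h_v$ be the number of edges on its shortest path counted from the first vertex with distance in bucket $i$.
\item Your induction shows $v$ holds its final distance after iteration $h_v$. So it is scanned for the last time in some iteration $\le h_v+1$.
\item Since $v$ is scanned at most once per iteration, it has at most $h_v$ rescans.
\item Charge these rescans to the $h_v$ distinct predecessors $x$ of $v$ on that subpath; each satisfies $\langle x,v\rangle\in C_\Delta$.
\item Summing over $v$ gives at most $n_\Delta$ rescans. Multiplying by $\deg_{\mathrm{light}}(v)$ gives at most $m_\Delta$ re-relaxations under the triple definition.
\end{enumerate}

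\paragraph{Step~3.} Step~3 is acceptable as an upper bound, but it is not the mechanism the paper describes. In Meyer--Sanders (and in the paper's explanation), $\degmax\log n$ is the w.h.p.\ per-phase load imbalance when vertices and requests are distributed randomly among processors. In your one-processor-per-vertex scheme with semisorting, the two costs add to $O(\degmax+\log n)$ per iteration, so multiplying them only loosens the bound. For the work bound you additionally need the semisorts used for min-combining and bucket moves to have linear work w.h.p.
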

 
 In \Cref{thm:delta-stepping}, $\Delta$ denotes the bucket width, $n_\Delta$ and $m_\Delta$ are graph-dependent parameters introduced by Meyer and Sanders which upper-bound the total number of reinsertions of vertices into the current bucket and the re-relaxations of their incident light edges, $L$ denotes the eccentricity of the starting vertex and thus $\frac{L}{\Delta}$ is the number of buckets that are processed. Further, $\ell_\Delta$ corresponds to one plus the maximum number of edges that occur on any traversed path with total weight at most $\Delta$. In other words, $\ell_\Delta$ is a parameter which upper bounds the number of phases needed to process each nonempty bucket. Finally, $\degmax$ denotes the maximum vertex degree of the graph. In the simple parallel implementation of Meyer and Sanders, the $\degmax\log n$ factor results from the randomized distribution and load balancing of relaxation requests among the processors. Their analysis bounds the load imbalance by $O(\degmax\log n)$ per phase with high probability. Further, Meyer and Sanders show that under the additional assumption that edge weights are random (i.e., independent and uniformly distributed in $[0, 1]$), Meyer and Sanders prove that choosing $\Delta = \Theta(\frac{1}{\degmax})$ yields that $\mathbb{E}[n_\Delta]=O(n)$ and $\mathbb{E}[m_\Delta]=O(n)$, and that both bounds also hold with high probability. This leads to simplified bounds of $O(n + m + \degmax^2 \cdot L \cdot \log^2 n)$ expected work and $O(\degmax^2 \cdot L \cdot \log^2 n)$ depth w.h.p. \cite{Meyer}. Since our analysis does not rely on these assumptions, we use the general work-depth bounds stated above throughout the remainder of this paper.

\begin{algorithm}
\caption{\textsc{LazyCapped$\Delta$-stepping}$(G, s, r, d)$}
\label{alg:mds}

\begin{algorithmic}[1] 

\Require Graph $G=(V,E)$, starting vertex $s$, array of distances $d$, radius $r$
\Ensure $d[v]$ stores the smallest known distance to $v$ for every vertex reachable within distance $r$

\State $d[s] \leftarrow 0$
\State $B_0 \leftarrow$ bucket with single vertex $s$
\For{$i$ from $0$ to $\ceil{\frac{r}{\Delta}}$} 
    \State $R := \emptyset$ \Comment{Set of vertices removed from the current bucket}
    \While{$B_i \neq \emptyset$}
        \State $S \leftarrow B_i$ \Comment{Snapshot of current bucket}
        \State $B_i \leftarrow \emptyset$
        \State $R \leftarrow R \cup S$
        \ParForEach{light edge $(u,v) \in E$ with $u \in S$, $w(u,v) \leq \Delta$ and $d_u + w(u,v) \leq r$}
            \If{$d_v > d_u + w(u,v)$}
                \State $d_v \leftarrow d_u + w(u,v)$
                \State Move $v$ into new bucket if necessary
                    \Comment{$v$ may (re)enter $B_i$}
            \EndIf
        \EndParForEach
    \EndWhile
    \ParForEach{heavy edge $(u,v) \in E$ with $u \in R$, $w(u,v) > \Delta$ and $d_u + w(u,v) \leq r$}
        \If{$d_v > d_u + w(u,v)$}
            \State $d_v \leftarrow d_u + w(u,v)$
            \State Move $v$ into new bucket if necessary
                \Comment{$v$ may not (re)enter $B_i$}
        \EndIf
    \EndParForEach
\EndFor

\end{algorithmic}
\end{algorithm}

Now let us take a look at the slightly modified version of simple parallel $\Delta$-stepping. As shown in \Cref{alg:mds}, two slight modifications have been made to the algorithm. The first one is lazy updates: The algorithm does not reset the distance array $d$ and thus vertices are only updated if their new distance improves upon the best known previous distance. For a single invocation, this modification can only reduce the number of successful relaxations and does not worsen the asymptotic work or depth bound.\\

The second modification is a distance cap $r$, which ensures that an edge $(u,v)$ is only relaxed if $d_u + w(u,v) \leq r$. This modification has effects on the work and depth of a single run. The effects are as follows:

\begin{enumerate}
\item {The explored number of vertices and edges is reduced from $n$, $m$ to $n_r$, $m_r$}.
\item {Similarly, the number of reinsertions and re-relaxations is reduced from $n_\Delta$, $m_\Delta$ to $n_{r, \Delta}$, $m_{r, \Delta}$}.
\item {The number of buckets is reduced from $\frac{L}{\Delta}$} to $\ceil{\frac{r}{\Delta}}$.
\end{enumerate}

Instead of all vertices and edges, we now have to only consider them when they are within $r$. Since buckets are only created if at least one vertex is inserted, the number of buckets is also reduced. The following lemma gives the corresponding work and depth bounds for Lazy Capped $\Delta$-stepping.

\begin{lemma}
\label{lem:lcds}

Lazy Capped $\Delta$-stepping on graphs with arbitrary positive edge
weights has work

\[
W =
O\!\left(
n_r+m_r+n_{r,\Delta}+m_{r,\Delta}
+\left\lceil\frac{r}{\Delta}\right\rceil
\degmax\,\ell_\Delta\log n
\right)
\]

and depth

\[
D =
O\!\left(
\left\lceil\frac{r}{\Delta}\right\rceil
\degmax\,\ell_\Delta\log n
\right).
\]

\end{lemma}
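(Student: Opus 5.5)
The plan is to re-run the Meyer--Sanders analysis behind \Cref{thm:delta-stepping}, restricted to the part of the graph the capped routine can actually touch. Let $V_r$ be the set of vertices $v$ with $d_s(v)\le r$, where $d_s$ is the distance from the source $s$. Let $E_r$ be the set of edges incident to $V_r$, so that $n_r=|V_r|$ and $m_r=|E_r|$. Analogously, $n_{r,\Delta}$ and $m_{r,\Delta}$ are the reinsertion and re-relaxation counts, counted only over vertices in $V_r$ and light edges incident to them.

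\textbf{Step 1: every processed vertex and edge lies in the capped region.} A vertex $v$ is inserted into a bucket only after a relaxation $d_u+w(u,v)\le r$ succeeds. By induction, the value $d_u+w(u,v)$ is the length of an actual path from $s$. Hence every vertex ever placed in a bucket belongs to $V_r$. Vertices whose $d$-value came from an earlier call but that are not reached in this call are never inserted, because the lazy rule only inserts on a strict improvement. Consequently only edges incident to $V_r$ are scanned.

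\textbf{Step 2: bound the buckets.} Every inserted vertex has tentative distance at most $r$, so only buckets $B_0,\dots,B_{\lceil r/\Delta\rceil}$ can ever be nonempty. This is exactly the loop range of \Cref{alg:mds}, so $L/\Delta$ is replaced by $\lceil r/\Delta\rceil$.

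\textbf{Step 3: bound the phases and the per-phase cost.} Within one bucket, the number of light-edge phases is still at most $\ell_\Delta$. The Meyer--Sanders argument for this uses only paths of weight at most $\Delta$ between settled vertices, and restricting to the capped subgraph can only shorten these chains. The load-balancing analysis is also unchanged and gives $O(\degmax\log n)$ time per phase w.h.p. Multiplying gives the depth bound $O(\lceil r/\Delta\rceil\,\degmax\,\ell_\Delta\log n)$. The same quantity is the idle and overhead term in the work.

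\textbf{Step 4: total work.} Every first insertion is charged to a vertex in $V_r$, every first scan to an edge in $E_r$, and every reinsertion or re-relaxation to $n_{r,\Delta}$ or $m_{r,\Delta}$. Together with the phase overhead this gives the stated work bound.

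\textbf{Main obstacle.} The hardest part is Step 3, specifically checking that the lazy rule does not break the Meyer--Sanders invariant. Their invariant says that when bucket $B_i$ is emptied, all vertices with true distance in $[i\Delta,(i+1)\Delta)$ are settled. With lazy updates this has to be restated relative to the function $\min(d_{\text{old}}, d_s)$, where $d_{\text{old}}$ is the distance array inherited from earlier calls. I would argue that vertices already holding smaller values act as pruned, already-settled vertices. Every path the algorithm still needs to propagate along is then a subpath of a path from the current source, so the invariant holds for the subgraph actually explored. Since all parameters are monotone under restriction to a subgraph, the bounds carry over.
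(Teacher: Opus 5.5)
Your proposal is correct and follows the same route as the paper. The paper's proof likewise restricts the Meyer--Sanders bounds to the region within distance $r$ of the source and substitutes $n_r,m_r,n_{r,\Delta},m_{r,\Delta}$ and $\lceil r/\Delta\rceil$; you are more careful than the paper, which only remarks that lazy updates cannot increase the number of successful relaxations and never addresses the per-bucket phase count. One step needs more care: if $\ell_\Delta$ ranges only over minimum-weight paths, your final claim that all parameters are monotone under passing to a subgraph fails for $\ell_\Delta$, so the bound should rest on your ``pruned vertices act as settled'' observation instead, which holds because in the paper's uses the inherited array is a (capped) distance-to-current-centers function, so the paths that determine the final values are shortest paths of $G$ from the source and the original $\ell_\Delta$ applies.
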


\begin{proof}
Lazy updates never increase the number of successful relaxations within a
single execution and therefore do not worsen the asymptotic work or
depth. Introducing a distance cap restricts the computation to vertices
and edges within distance $r$ from the source. Consequently, the
explored graph is reduced from $(n,m)$ to $(n_r,m_r)$, the numbers of
reinsertions and re-relaxations become $n_{r,\Delta}$ and
$m_{r,\Delta}$, and the number of processed buckets decreases from
$\frac{L}{\Delta}$ to
$\left\lceil\frac{r}{\Delta}\right\rceil$.
Plugging these variables into the original work and depth bounds of
Meyer and Sanders yields the result.
\end{proof}

Those modifications can be applied to the other SSSP routines with analogous effects on their asymptotic running time. In our implementations, the underlying SSSP routines have the following modifications based on the calling $k$-center algorithm:

\begin{enumerate}
    \item MIS based algorithms: SSSP computations use both lazy updates and a distance cap. 
    \item Farthest point selection based algorithms: SSSP computations only use lazy updates.
\end{enumerate}

 Hence, for the remainder of this paper we use the prefixes \textbf{Lazy} or \textbf{Lazy Capped} to indicate whether an SSSP routine employs only lazy updates or both lazy updates and a distance cap, respectively, for example Lazy Dijkstra, Lazy $\Delta$-stepping, Lazy Capped Dijkstra, or Lazy Capped $\Delta$-stepping. Further, we decompose the work of $\Delta$-stepping into two parts: the exploration work, corresponding to vertex visits, edge relaxations, reinsertions, and re-relaxations, and the phase overhead, corresponding to the synchronization cost incurred in each processing phase. We use the following notation throughout the remainder of the paper:
 
\begin{itemize}

\item For Lazy $\Delta$-stepping, let
\begin{align*}
&\ELDS := n+m+n_\Delta+m_\Delta, \\
&\PLDS := \frac{L}{\Delta}\degmax\,\ell_\Delta\log n.
\end{align*}

\item For Lazy Capped $\Delta$-stepping, let
\begin{align*}
&\ELCDS := n_r+m_r+n_{r,\Delta}+m_{r,\Delta}, \\
&\PLCDS := \left\lceil\frac{r}{\Delta}\right\rceil\degmax\,\ell_\Delta\log n.
\end{align*}

\item Consequently,
\begin{align*}
&\WLDS = O(\ELDS+\PLDS), \\
&\DLDS = O(\PLDS),
\end{align*}
and
\begin{align*}
&\WLCDS = O(\ELCDS+\PLCDS), \\
&\DLCDS = O(\PLCDS).
\end{align*}

\end{itemize}
\subsection{Omitted Proof of Section~\ref{sec:abboud}}
\label{sec:abboudProof}
\begin{proof}[Proof of running time]
    Consider the $i$-th iteration of the while loop. For each vertex $u \in V$, let $Y_i^u$ be the indicator random variable that denotes whether the global distance of $u$ is updated in this round. Keep in mind we use lazy updates, i.e. a vertex $v$ is only inserted into a bucket if its best known previously computed distance has been updated. \\
    
    For the sake of simplicity, before accounting for the remaining work performed by Lazy Capped $\Delta$-stepping, we first analyze the exploration work by isolating the contribution of successful distance decreases analogously to the analysis by Abboud et al. \cite{Abboud}.
    In the following, we temporarily ignore the distance cap (which can only reduce the explored subgraph) as well as the distinction between light and heavy edges, bucket reinsertions, and re-relaxations within a bucket, instead charging work to vertices whenever their distance is strictly decreased. With that in mind, the total work of \Cref{alg:mis} attributable to successful distance decreases is at most
    \begin{align*}
    W' \;=\;O\!\left(\underbrace{\sum_{u \in V} \deg(u)\cdot \sum_{i=1}^{n}Y_i^u}_{\scriptsize\text{edge relaxations / scans}}\;+\;\underbrace{ \sum_{u \in V}\sum_{i=1}^{n} Y_i^u}_{\scriptsize\substack{\text{bucket operations (}O(1) \\ \text{ work per decrease)}}}\right).
    \end{align*}
    
    Here we count bucket insertions/moves as $O(1)$ work. Other bucket operations such as contention resolution is accounted for separately below. In other words, every time a vertex $u$'s distance is improved, it is inserted into a bucket which means it later has to be processed, i.e. its incident edges are scanned, relaxations and the corresponding bucket operations are performed. Now let $d_i^u$ and $T_i$ denote the distance $d_u$ and the set $T$ before the $i$-th iteration of \Cref{alg:mis} respectively. Additionally, let $W_i^u$ denote the set of all vertices in $T_i$ whose shortest path distance to $u$ is less than $d_i^u$, i.e $W_i^u := \{w \in T_i \mid d_G(w,u) < d_i^u\}$. That is to say that $W_i^u$ is the set of vertices which would improve the distance of $u$, if one of them were to be selected as the next center. Since $d_i^u$ is nonincreasing, we have that $W_1^u \supseteq W_2^u \supseteq ... \supseteq W_n^u$. Moreover, $Y_i^u = 0$ whenever $W_i^u = \emptyset$.\\ 
    
    Now suppose vertex $v$ is selected as the next center during iteration $i$. If $v$ is not an element of $W_i^u$, then $u$'s distance will not improve and thus $u$ will not contribute any cost during that iteration. Hence, fix $W_i^u$ and condition on the event $Y_i^u = 1$. Conditioned on $Y_i^u = 1$, the chosen $v$ is uniform in $W_i^u$ because $v$ is drawn uniformly from $T_i$. The next set $W_{i+1}^u$ is then exactly $\{w \in W_i^u \mid d_G(w,u) < d_G(v,u)\}$ since $v$ is now the closest center to $u$ among all centers selected so far. In other words, $W_{i+1}^u$ is at most the rank of $d_G(v,u)$ in the sorted list $(d_G(w,u))_{w \in W_i^u}$ minus one. Since $v$ is uniformly random in $W_i^u$, the expected size of $W_{i+1}^u$ is at most $\frac{|W_i^u|}{2}$. Applying Lemma $2.1$ of Abboud et al.~\cite{Abboud} yields that $\mathbb{E}\left(\sum_{i = 1}^{n} Y_i^u\right) \leq O(\log n)$. Therefore, the expected total work attributable to successful distance decreases is
    \begin{align*}W' &= O\!\left(\sum_{u \in V} \deg(u)\cdot \log n + \sum_{u \in V}\log n\right) \\ 
    &= O(m \cdot \log n + n \cdot \log n).
    \end{align*}
    
    So far we have only accounted for the work caused by successful distance decreases. Under the distance cap $r$, the same charging argument applies only to the explored vertices and edges together with their corresponding reinsertions and re-relaxations. Therefore, the expected exploration work becomes
    \[O\!\left((n_r + m_r + n_{r,\Delta} + m_{r,\Delta})\log n\right) = O(\ELCDS\log n).\]

    Each Lazy Capped $\Delta$-stepping invocation additionally incurs a
    phase overhead of $\PLCDS$. Since the SSSP computations are
    performed sequentially, this contributes a total work of
    $O(k_r\PLCDS)$. Combining both terms yields
    
    \[
    \mathbb{E}[W]
    =
    O(\ELCDS\log n+k_r\PLCDS),
    \]
    
    which proves the claimed work bound. Since each SSSP computation
    has depth $\DLCDS$, the total depth is
    $O(k_r\DLCDS)$.
    $\qed$
    \end{proof}

 Thus, replacing Lazy Capped Dijkstra by the Lazy Capped $\Delta$-stepping routine changes only the cost of each individual SSSP computation, while the randomized charging argument establishing the $O(\log n)$ expected number of successful distance decreases per vertex remains unchanged. The substitution of Lazy $\Delta$-stepping or Lazy Capped $\Delta$-stepping for the corresponding Dijkstra routine affects the remaining algorithms analogously: Their correctness arguments remain unchanged, and their running time analyses are obtained by replacing the cost of the underlying SSSP routine with the corresponding work and depth bounds established in Section~\ref{sec:sssp}. We therefore omit these analogous running time analyses for the remaining algorithms.

\subsubsection{Abboud MIS Implementation Details}

As stated above, the implementation of this algorithm uses both lazy updates and distance caps for its shortest path routines. One important detail concerning algorithms employing a binary search to find a suitable cluster radius is how to realize that binary search. While the naive approach would be to start at $r_0 = nW, r_1 = \frac{nW}{2}, r_2 = \frac{nW}{4}, ...$ until the first non-feasible $r_i$ has been found, in practice this is often a vast overestimation, leading to many SSSP calls exploring unnecessarily large portions of the input graph before a suitable $r_i$ can be found. Hence, our implementation first performs a doubling search over candidate radii starting at $r_0 = 2$ until a feasible $r_i$ has been found. While a small $r_i$ might still be an underestimation, the resulting SSSP calls are capped at this small radius and thus terminate early, leading to finding the first feasible $r_i$ faster in practice. After that, we continue as expected with a binary search between $r_i$ and $r_{i-1}$ until the smallest feasible radius has been found. 
\subsection{Omitted Proofs of Section~\ref{sec:thorup}}
\label{sec:thorupProofs}

\begin{proof}[Proof of Lemma~\ref{lm:Thorup main lemma}]
We show that after $ c \log n $ iterations for which $ \delta $ is in the $ (1+\epsilon) $-estimation regime (i.e., $ \delta^* < \delta \leq (1 + \epsilon) \delta^* $ holds) all heavy vertices have been eliminated w.h.p., which in turn means $\delta^*$ has decreased by a constant factor.
Consider the start of the first round and let $W$ denote the remaining vertex set and let
\[
H := \{v\in W : |\Gamma_{\leq r}(v)\cap W| \geq 0.6\delta^*\}
\]
be the heavy vertices at the start of first round.

For round $i$ let $H_{i-1}\subseteq H$ denote the heavy vertices that are still heavy at the start of the round. Let
\[
\gamma_i = \frac{|H_i|}{|H_{i-1}|}
\]
be the factor by which the heavy vertices shrink during round $i$. 
Round $i$ is called \emph{good} if $\gamma_i \leq \gamma$, where $\gamma := \alpha(1-p_0) < 1$, $p_0 \in (0,1)$ is the constant from Thorup's analysis~\cite{Thorup}, and $\alpha > 1$ is chosen accordingly. Since $\delta$ remains within a constant factor of $\delta^*$, by Thorup's analysis, each vertex of $H_{i-1}$ is eliminated with probability at least $p_0$. Hence, linearity of expectation gives
\[
\mathbb{E}[\gamma_i] \leq (1-p_0).
\]
Then, by Markov's inequality,
\begin{align*}
\Pr\!\left[\gamma_i > \gamma\right] \leq \frac{\mathbb{E}[\gamma_i]}{\gamma} \leq \frac{1}{\alpha}.
\end{align*}
Hence
\[
\Pr\!\left[\text{round}\: i \:\text{is good}\right] \geq 1 - \frac{1}{\alpha} := q > 0.
\]
Therefore each round is good with probability at least $q$ and this upper bound can be derived independent of the outcome of previous rounds. Let $Z_i$ be the indicator variable that round $i$ is good and define
\[
Z=\sum_{i=1}^{T} Z_i.
\]
Then $\mathbb{E}[Z] \geq qT$.

After $k$ good rounds the number of surviving vertices of $H$ is at most $|H|\gamma^k$. Since $|H|\le n$, all heavy vertices are eliminated once
\[
\gamma^k n < 1,
\]
which holds for
\[
k_0=\left\lceil\log_{1/\gamma} n\right\rceil.
\]
We now choose $T$ large enough that
\[
qT \ge
\left(
8(c+2)+
\frac{2}{\ln(1/\gamma)}
\right)\ln n.
\]
Then $\mu=\mathbb{E}[Z]\ge qT$ satisfies
\[
\mu\ge
\max\{2k_0,\;8(c+2)\ln n\},
\]
and standard Chernoff bound arguments yield
\[
\Pr[Z<k_0]
\le
\frac1{n^{c+2}}.
\]

Hence, after $O(c\log n)$ rounds, all heavy vertices of the current phase have been eliminated with probability at least $1-\frac1{n^{c+2}}$. Consequently, no radius-$r$ ball contains at least $0.6\delta^*$ remaining vertices, implying that $\delta^*$ decreases by a constant factor.
\end{proof}

\begin{proof}[Proof of Lemma~\ref{lm:thorup}]
    We prove the statement inductively over the phases. In the first phase, the algorithm sets $\delta > n \geq \delta^*$, meaning the invariant always holds at the beginning of the first phase.
    
    Now suppose that the invariant holds at the beginning of phase $i$, that is $\delta_i > \delta_i^*$. At the end of the phase, the algorithm sets $\delta_{i+1}=\tfrac{\delta_i}{1+\epsilon}$. We distinguish two cases:\\
    \textbf{Case 1:}\\
    $\tfrac{\delta_i}{1+\epsilon} > \delta_i^* $. As $ \delta^*$ is non-increasing (i.e., $ \delta_i^* \geq \delta_{i+1}^* $), the invariant hold for phase $i+1$.\\
    \textbf{Case 2:}\\
    $\tfrac{\delta_i}{1+\epsilon} \leq \delta_i^*$. Together with the induction hypothesis, this implies $\delta_i^* < \delta_i \leq (1+\epsilon) \delta_i^*$. 
    Note that the value of $ \delta $ is not changed during the $ c \log n $ iterations of the current phase and $ \delta^* $ is non-increasing.
    Thus, the only possibility for the relationship $ \delta^* < \delta \leq (1 + \epsilon) \delta^* $ ceasing to hold is a decrease of $ \delta^* $ to a value below $ \tfrac{\delta_i}{1 + \epsilon} $.
    We can thus apply Lemma~\ref{lm:Thorup main lemma}, by which such a decrease must happen within $ c \log n $ iterations with high probability.
    Thus, $ \delta_{i+1}^* < \tfrac{\delta_i}{1 + \epsilon} = \delta_{i+1} $ holds as desired with high probability.
    
    We conclude that, conditioned on the invariant holding at the beginning of phase $i$, it also holds at the beginning of phase $i+1$ with probability at least $1-\frac{1}{n^{c+2}}$. Since the algorithm performs only $O(\log n)$ phases, a union bound implies that the invariant holds at the beginning of every phase and after every update of $\delta$ with high probability.
\end{proof}

\begin{proof}[Proof of Theorem~\ref{thm:thorup}]
Consider the final of the $ O (\log_{1 + \epsilon} n) $ phases, for which $ \delta = 1 $.
In that phase, the algorithm sets $ R = W $. By the invariant of Lemma~\ref{lm:thorup} we have $ \delta^* < \delta = 1 $ and thus $ \delta^* = 0 $ with high probability. Consequently, the algorithm with high probability sets $ T = W $ in its final iteration and terminates directly after because all nodes are removed from $ W $. As every phase consists of $ O (\log n) $ rounds, the Theorem follows.
\end{proof}

\begin{proof}
Algorithm~\ref{alg:thorup} performs at most $O(\log_{1+\epsilon} n)$ phases, each consisting of $O(\log n)$ rounds. Thus, the total number of rounds is $O(\log^2 n)$. By Lemma~\ref{lm:thorup}, the invariant $\delta > \delta^*$ holds at the beginning of each phase and immediately after each update of $\delta$ with high probability. Since every vertex belongs to its own distance-$r$ neighborhood, we have $\delta^* \geq 1$ whenever $W \neq \emptyset$. Consequently, with high probability, the algorithm cannot reach $\delta \leq 1$ while $W$ is nonempty. Hence, it terminates without invoking the fallback and the Theorem follows.
\end{proof}

\subsubsection{Simplified Thorup Implementation Details}

Just like the algorithm of Abboud et al., the SSSP computations called by this algorithm use lazy updates and the currently guessed $r$ as a distance cap. As in the implementation of Abboud et al.'s algorithm, all weighted shortest path computations use Lazy Capped $\Delta$-stepping. Therefore, the same work and depth bounds from the previous section apply to each shortest path computation. Since our algorithm only succeeds w.h.p., we simply employ the previously described algorithm of Abboud et al. as a fallback which gets invoked if $\delta = 1$ and $W \neq \emptyset$, i.e. we did not perform enough rounds while $\delta$ was in the correct regime \footnote{In theory, a polylogarithmic depth bound can be obtained by instead employing the following fallback: First, compute the all-pairs distances using a repeated squaring approach, then use a parallel MIS algorithm inside of the Hochbaum-Shmoys algorithm. This approach has polynomial work bound but is only use with small enough probability}. Abboud's algorithm then uses the already computed independent set as a starting point. Since Abboud's algorithm also employs lazy updates, the distances computed by Thorup's algorithm can be reused directly. The step A.2.3 in Thorup's algorithm~\cite{Thorup} was realized through multi-source variants of our SSSP routines capped at distance $r$ to identify the vertices that form the set $T$. Step A.2.5 (removing all vertices from $W$ at distance $\leq r$ to any vertex in $U$) was implemented analogously.
\subsection{Gonzalez Implementation Details}
\label{sec:GonzalezImp}

While the $2$-approximation of Gonzalez holds for an arbitrary choice of the first center, we select it uniformly at random to also avoid being biased by the sequence in which the input of a dataset is given. On weighted graphs, the implementation uses Lazy Dijkstra in the sequential setting and Lazy $\Delta$-stepping in the parallel setting. Since Gonzalez' algorithm is not radius based, we do not employ a distance cap here. The implementation is therefore straightforward.
\subsection{Augmented SSSP and Approximate Gonzalez Implementation Details}
\label{sec:ApproximateGonzalezImp}
\begin{algorithm}
\caption{\textsc{Augm.Dijkstra}$(G, u, d, \alpha, H_1,\ldots,H_B)$}
\label{alg:augmented-dijkstra}
\begin{algorithmic}[1]

\Require Graph $G=(V,E,w)$, starting vertex $u$, distance array $d$, bags $H_1,\ldots,H_B$
\Ensure Updated distances $d$ and bags $H_1,\ldots,H_B$

\State $d[u] \gets 0$
\State $Q \gets$ priority queue containing $(u,0)$

\While{$Q \neq \emptyset$}
    \State $(x,d[x]) \gets Q.\textproc{FindMin}()$
    \State $Q.\textproc{DeleteMin}()$

    \ForAll{$(x,y) \in E$}
        \If{$d[y] > d[x] + w(x,y)$}
            \State $H_{\lceil \log_{\alpha} d[y] \rceil} \gets H_{\lceil \log_{\alpha} d[y] \rceil} \setminus \{y\}$
            \State $d[y] \gets d[x] + w(x,y)$
            \State $H_{\lceil \log_{\alpha} d[y] \rceil} \gets H_{\lceil \log_{\alpha} d[y] \rceil} \cup \{y\}$
            \State $Q.\textproc{DecreaseKey}(y,d[y])$
        \EndIf
    \EndFor
\EndWhile

\end{algorithmic}
\end{algorithm}

Since all the augmented SSSP routines have to do in addition to the distance computation is the book keeping of the bags, they just use a flag to switch bag maintenance on/off. The parameter $\epsilon$ which determines $\alpha$ can be passed manually. The rest of the implementation is straightforward. 
\subsection{Algorithmic Parameter Choices}
\label{sec:parameters}

As mentioned, we performed tests on a few instances before the actual experiments to determine algorithm-specific parameter choices. The goal was to fix these parameters for the full experimental sweep over all instances. Since the algorithms of Gonzalez as well as the maximal distance-$r$ independent set algorithm by Abboud et al. do not have such parameters, they are not listed here. Since these tests were done prior to the full experimental sweep across all instances, the timeouts chosen may differ for each graph class. We now discuss the results of these preliminary tests.

\subsubsection{$\Delta$-stepping}

This SSSP routine has a parameter $\Delta$ which determines the width of the distance buckets, i.e. the $i$-th bucket contains all vertices $v$ satisfying $i\Delta \leq d_v < (i+1)\Delta$. We tested different values for a variety of smaller and larger weighted graphs, both sparse and dense. 

\begin{figure}[H]
\centering
\begin{minipage}{0.88\linewidth}
\centering
\captionsetup{width=\linewidth}

\includegraphics[width=0.9\linewidth]{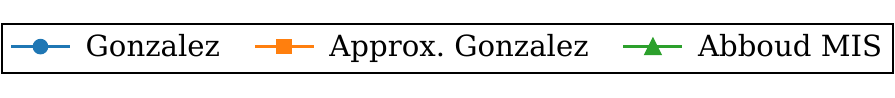}\\[-0.1em]

\begin{subfigure}[t]{0.48\linewidth}
\centering
\includegraphics[width=\linewidth]{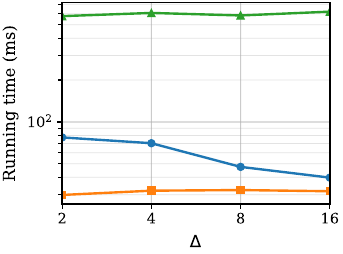}
\caption{$n=10^5$, $\rho=4$, $k=20$}
\end{subfigure}
\hfill
\begin{subfigure}[t]{0.48\linewidth}
\centering
\includegraphics[width=\linewidth]{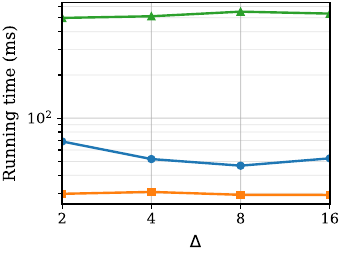}
\caption{$n=10^5$, $\rho=8$, $k=20$}
\end{subfigure}

\par\vspace{-0.15em}

\begin{subfigure}[t]{0.48\linewidth}
\centering
\includegraphics[width=\linewidth]{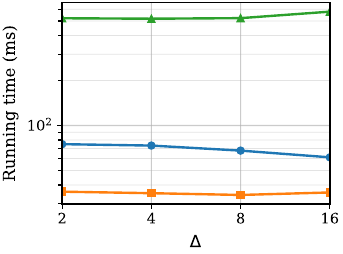}
\caption{$n=10^5$, $\rho=16$, $k=20$}
\end{subfigure}
\hfill
\begin{subfigure}[t]{0.48\linewidth}
\centering
\includegraphics[width=\linewidth]{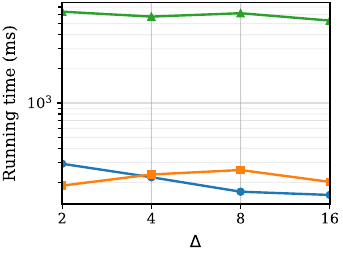}
\caption{$n=10^5$, $\rho=4$, $k=200$}
\end{subfigure}

\par\vspace{-0.15em}

\begin{subfigure}[t]{0.48\linewidth}
\centering
\includegraphics[width=\linewidth]{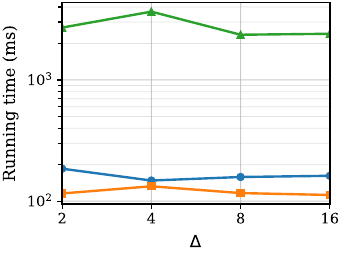}
\caption{$n=10^5$, $\rho=8$, $k=200$}
\end{subfigure}
\hfill
\begin{subfigure}[t]{0.48\linewidth}
\centering
\includegraphics[width=\linewidth]{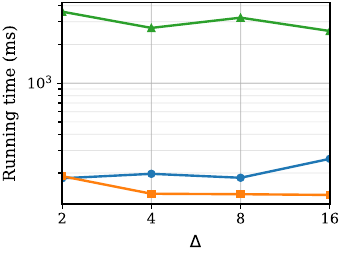}
\caption{$n=10^5$, $\rho=16$, $k=200$}
\end{subfigure}

\par\vspace{-0.15em}

\begin{subfigure}[t]{0.48\linewidth}
\centering
\includegraphics[width=\linewidth]{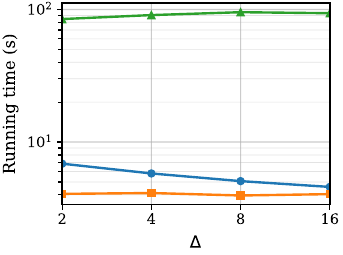}
\caption{$n=10^7$, $\rho=4$, $k=200$}
\end{subfigure}
\hfill
\begin{subfigure}[t]{0.48\linewidth}
\centering
\includegraphics[width=\linewidth]{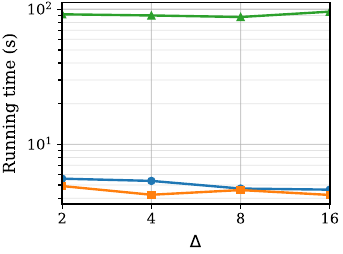}
\caption{$n=10^7$, $\rho=8$, $k=200$}
\end{subfigure}

\par\vspace{-0.15em}

\begin{subfigure}[t]{0.48\linewidth}
\centering
\includegraphics[width=\linewidth]{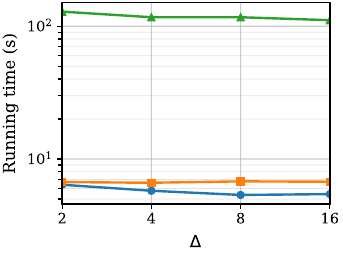}
\caption{$n=10^7$, $\rho=16$, $k=200$}
\end{subfigure}
\hfill
\begin{subfigure}[t]{0.48\linewidth}
\centering
\includegraphics[width=\linewidth]{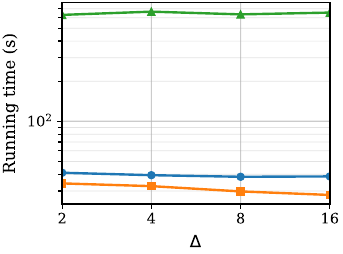}
\caption{$n=10^7$, $\rho=4$, $k=2000$}
\end{subfigure}

\par\vspace{-0.15em}

\begin{subfigure}[t]{0.48\linewidth}
\centering
\includegraphics[width=\linewidth]{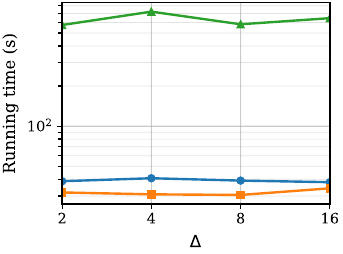}
\caption{$n=10^7$, $\rho=8$, $k=2000$}
\end{subfigure}
\hfill
\begin{subfigure}[t]{0.48\linewidth}
\centering
\includegraphics[width=\linewidth]{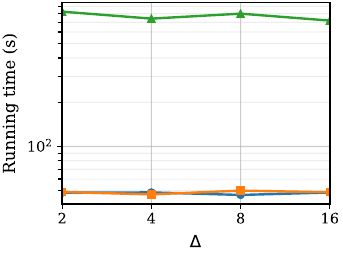}
\caption{$n=10^7$, $\rho=16$, $k=2000$}
\end{subfigure}

\caption{Influence of the bucket width $\Delta$ on the running time of the parallel weighted implementations.}
\label{fig:delta-parameter}
\end{minipage}
\end{figure}

The results indicate that the choice of $\Delta$ only has a minor influence on the running time across all algorithms and instances tested. For most configurations, the running times vary by at most approximately $10\%$, and many of the absolute differences are below one second. The Simplified Thorup algorithm does not appear in these plots, since its execution time was generally higher than its competitors on graphs with high diameter which would therefore make these plots not as visually clear. However, the results was similar in that different values of $\Delta$ did not substantially affect the running time. Hence, we use $\Delta=8$ throughout the remaining experiments.\\

\subsubsection{Approximate Gonzalez}

This algorithm has a parameter $\epsilon$ which determines the size of the buckets used to select the next center. \\
\begin{figure}[H]
\centering
\begin{minipage}{0.8\linewidth}
\centering

\includegraphics[width=0.8\linewidth]{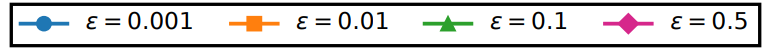}\\[-0.1em]

\begin{subfigure}[t]{\linewidth}
\centering
\includegraphics[width=\linewidth]{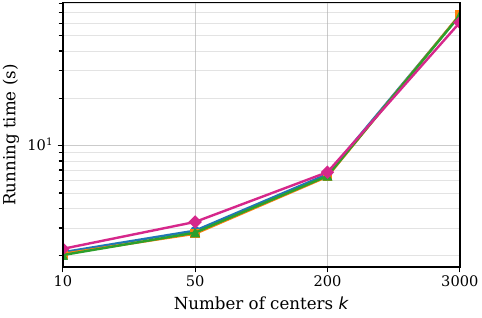}
\caption{Running time}
\end{subfigure}

\par\vspace{-0.15em}

\begin{subfigure}[t]{\linewidth}
\centering
\includegraphics[width=\linewidth]{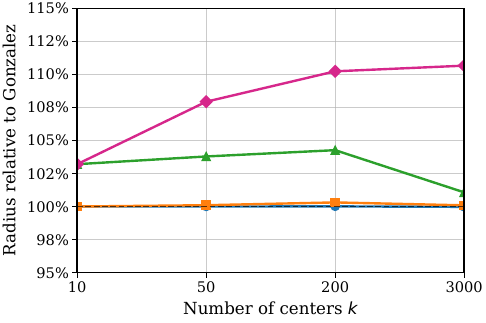}
\caption{Solution quality relative to vanilla Gonzalez}
\end{subfigure}

\caption{Influence of the approximation parameter $\epsilon$ on the
$\alpha$-Approximate Gonzalez algorithm for a weighted synthetic graph
with $n=10^7$ and $\rho=16$. All measurements use parallel
execution with $64$ threads.}
\label{fig:approx-gonzalez-epsilon-synthetic}

\end{minipage}
\end{figure}

\begin{figure}[H]
\centering
\begin{minipage}{0.88\linewidth}
\centering

\includegraphics[width=0.40\linewidth]{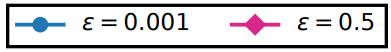}\\[-0.1em]

\begin{subfigure}[t]{0.48\linewidth}
\centering
\includegraphics[width=\linewidth]{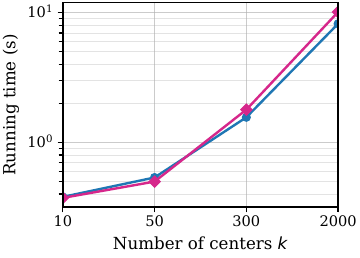}
\caption{LiveJournal: time}
\end{subfigure}
\hfill
\begin{subfigure}[t]{0.48\linewidth}
\centering
\includegraphics[width=\linewidth]{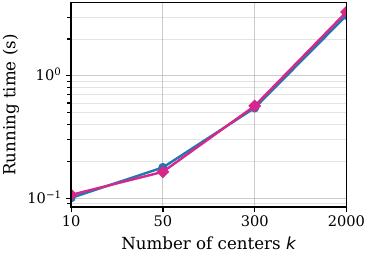}
\caption{YouTube: time}
\end{subfigure}

\par\vspace{-0.15em}

\begin{subfigure}[t]{0.48\linewidth}
\centering
\includegraphics[width=\linewidth]{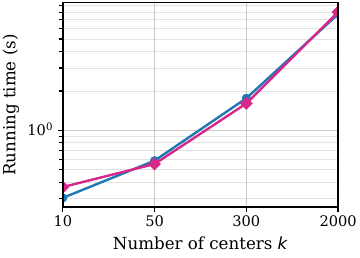}
\caption{Orkut: time}
\end{subfigure}
\hfill
\begin{subfigure}[t]{0.48\linewidth}
\centering
\includegraphics[width=\linewidth]{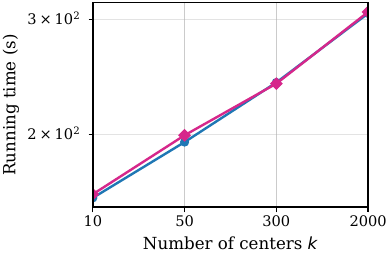}
\caption{USA-central: time}
\end{subfigure}

\par\vspace{-0.15em}

\begin{subfigure}[t]{0.48\linewidth}
\centering
\includegraphics[width=\linewidth]{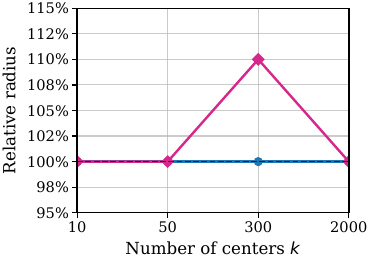}
\caption{LiveJournal: quality}
\end{subfigure}
\hfill
\begin{subfigure}[t]{0.48\linewidth}
\centering
\includegraphics[width=\linewidth]{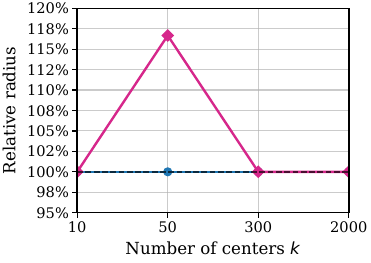}
\caption{YouTube: quality}
\end{subfigure}

\par\vspace{-0.15em}

\begin{subfigure}[t]{0.48\linewidth}
\centering
\includegraphics[width=\linewidth]{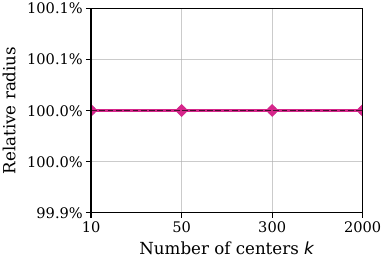}
\caption{Orkut: quality}
\end{subfigure}
\hfill
\begin{subfigure}[t]{0.48\linewidth}
\centering
\includegraphics[width=\linewidth]{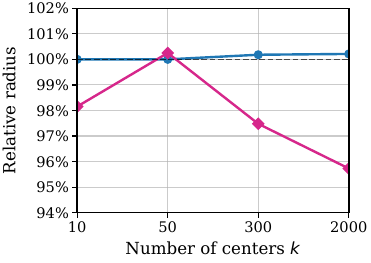}
\caption{USA-central: quality}
\end{subfigure}

\caption{Influence of the approximation parameter $\epsilon$ on the parallel $\alpha$-Approximate Gonzalez algorithm for four real-world graphs.}
\label{fig:approx-gonzalez-epsilon-real-world}

\end{minipage}
\end{figure}

Our tests have shown that a high value of $\epsilon$ such as $\epsilon = 0.5$ does not improve the running time significantly on any of the test instances. However, there is an impact on the solution quality for different $\epsilon$ on all instances. While it seems that on dense graphs a low $\epsilon$ produces a better radius, on sparse weighted graphs with high diameter such as USA-central we observed that a higher value of $\epsilon$ actually improved the solution quality even beyond the Gonzalez radius used as a baseline. We think this is due to the fact that because this graph is very sparse ($\rho < 2$), the increased bag size caused by higher $\epsilon$ values combined with the randomization leads to a better selection of centers than picking the exact farthest vertex. Hence, for our full experimental sweep we chose $\epsilon = 0.5$ for road network graphs and $\epsilon = 0.001$ for all other graphs.\\
\subsubsection{Simplified Thorup's Algorithm}
\label{par:thorup}

As mentioned previously, because our version of Thorup's algorithm does not use Cohen's near-linear-time size estimator, we can choose $\epsilon$ to be a constant.

\begin{figure}[H]
\centering
\begin{minipage}{0.88\linewidth}
\centering
\captionsetup{width=\linewidth}

\includegraphics[width=0.64\linewidth]{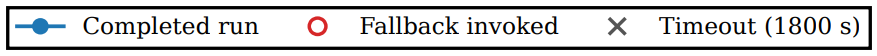}\\[-0.1em]

\begin{subfigure}[t]{0.48\linewidth}
\centering
\includegraphics[width=\linewidth]{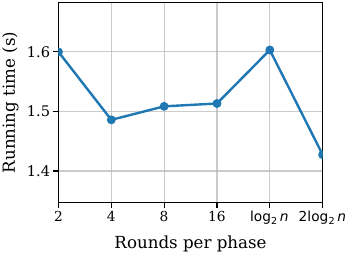}
\caption{$k=20$, $\rho=16$, unw.}
\end{subfigure}
\hfill
\begin{subfigure}[t]{0.48\linewidth}
\centering
\includegraphics[width=\linewidth]{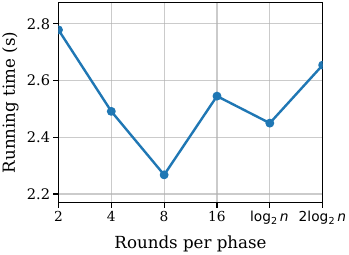}
\caption{$k=100$, $\rho=16$, unw.}
\end{subfigure}

\par\vspace{-0.15em}

\begin{subfigure}[t]{0.48\linewidth}
\centering
\includegraphics[width=\linewidth]{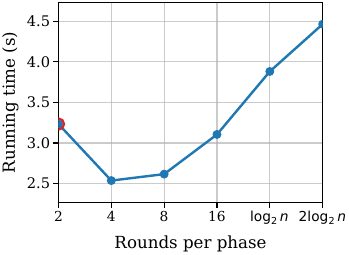}
\caption{$k=500$, $\rho=16$, unw.}
\end{subfigure}
\hfill
\begin{subfigure}[t]{0.48\linewidth}
\centering
\includegraphics[width=\linewidth]{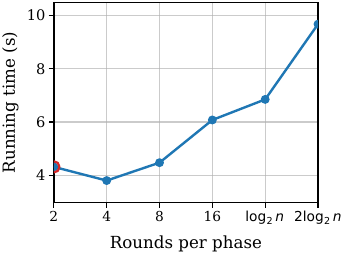}
\caption{$k=3000$, $\rho=16$, unw.}
\end{subfigure}

\par\vspace{-0.15em}

\begin{subfigure}[t]{0.48\linewidth}
\centering
\includegraphics[width=\linewidth]{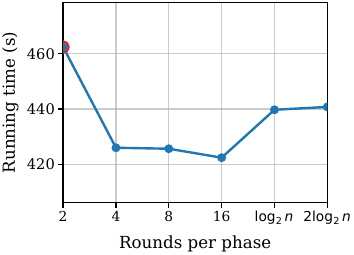}
\caption{$k=20$, $\rho=4$, wgh.}
\end{subfigure}
\hfill
\begin{subfigure}[t]{0.48\linewidth}
\centering
\includegraphics[width=\linewidth]{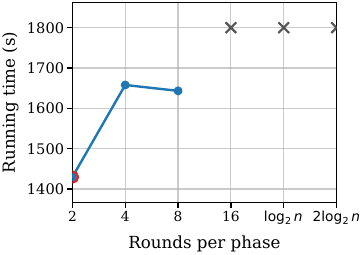}
\caption{$k=100$, $\rho=4$, wgh.}
\end{subfigure}

\par\vspace{-0.15em}

\begin{subfigure}[t]{0.48\linewidth}
\centering
\includegraphics[width=\linewidth]{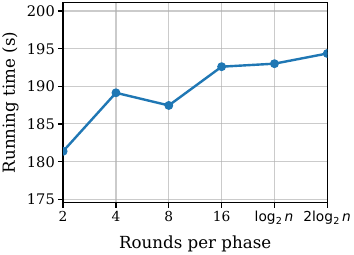}
\caption{$k=20$, $\rho=16$, wgh.}
\end{subfigure}
\hfill
\begin{subfigure}[t]{0.48\linewidth}
\centering
\includegraphics[width=\linewidth]{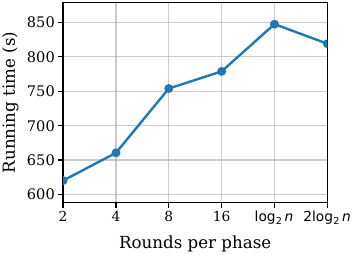}
\caption{$k=100$, $\rho=16$, wgh.}
\end{subfigure}

\caption{Influence of the number of rounds per phase on the running time of the Simplified Thorup algorithm on large synthetic graphs. A red ring marks configurations in which the fallback algorithm was invoked. An $\times$ denotes a run that reached the test timeout. All measurements use $64$ threads and parallel execution.}
\label{fig:thorup-rpp}

\end{minipage}
\end{figure}

The number of rounds per phase (rpp), i.e. per guessed value of $\delta$, should theoretically scale with $\log n$ to obtain the desired high-probability guarantee. In practice, however, the fallback algorithm was triggered only rarely, even with $\epsilon=1$ and a small constant number of rounds per phase. We therefore choose $\epsilon=1$, which halves the guessed value of $\delta$ after each phase and therefore minimizes the number of phases. We set the number of rounds per phase to $4$, which provides the best running time among the tested choices without entering fallbacks. Note that this constant number of rounds per phase is an empirical optimization and does not satisfy the assumptions of the high probability termination guarantee in Theorem~\ref{thm:thorup} which requires $O(\log n)$ rounds per phase.
\subsubsection{Parallelism on Small Synthetic Graphs}
\label{sec:parallelsmall}
\begin{figure}
\centering
\begin{minipage}{0.88\linewidth}
\centering
\includegraphics[width=0.9\linewidth]{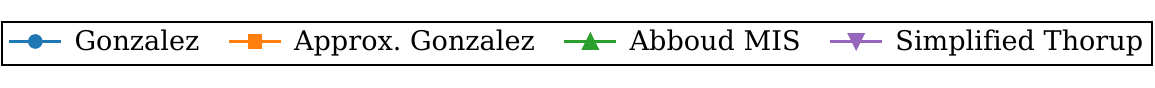}\\

\begin{subfigure}[t]{0.45\linewidth}
\centering
\includegraphics[width=\linewidth]{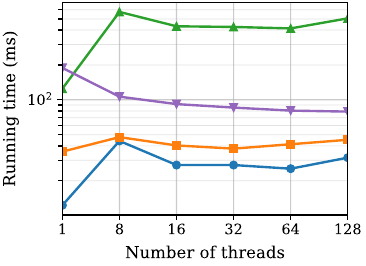}
\caption{$k=20$, $\rho=4$, unw.}
\end{subfigure}
\hfill
\begin{subfigure}[t]{0.45\linewidth}
\centering
\includegraphics[width=\linewidth]{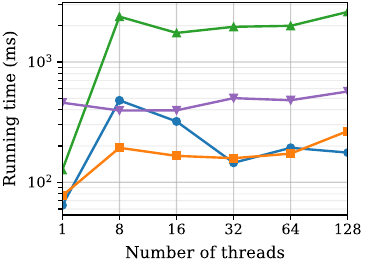}
\caption{$k=200$, $\rho=4$, unw.}
\end{subfigure}
\\[-0.15em]

\begin{subfigure}[t]{0.45\linewidth}
\centering
\includegraphics[width=\linewidth]{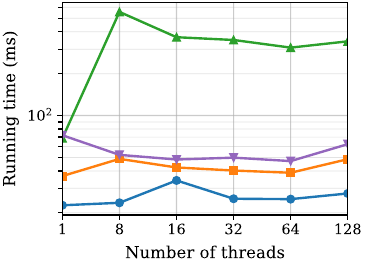}
\caption{$k=20$, $\rho=16$, unw.}
\end{subfigure}
\hfill
\begin{subfigure}[t]{0.45\linewidth}
\centering
\includegraphics[width=\linewidth]{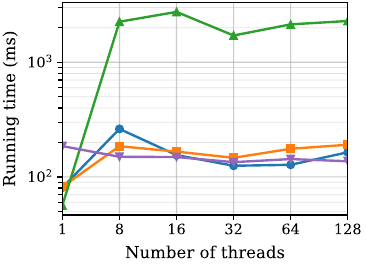}
\caption{$k=200$, $\rho=16$, unw.}
\end{subfigure}
\\[-0.15em]

\begin{subfigure}[t]{0.45\linewidth}
\centering
\includegraphics[width=\linewidth]{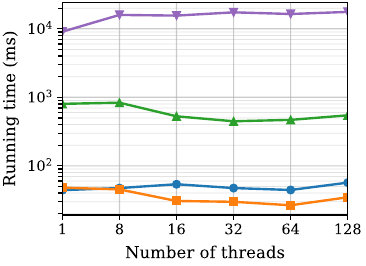}
\caption{$k=20$, $\rho=4$, wgh.}
\end{subfigure}
\hfill
\begin{subfigure}[t]{0.45\linewidth}
\centering
\includegraphics[width=\linewidth]{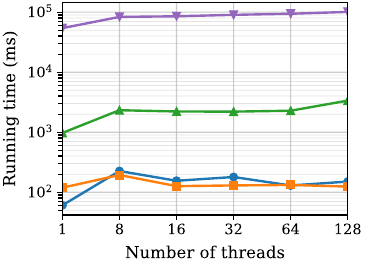}
\caption{$k=200$, $\rho=4$, wgh.}
\end{subfigure}
\\[-0.15em]

\begin{subfigure}[t]{0.45\linewidth}
\centering
\includegraphics[width=\linewidth]{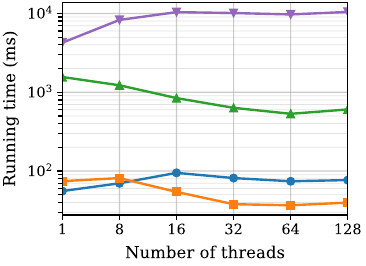}
\caption{$k=20$, $\rho=16$, wgh.}
\end{subfigure}
\hfill
\begin{subfigure}[t]{0.45\linewidth}
\centering
\includegraphics[width=\linewidth]{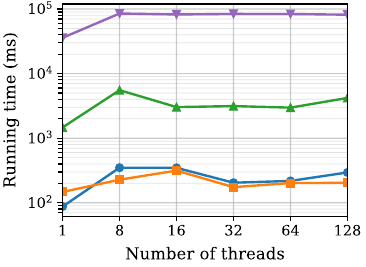}
\caption{$k=200$, $\rho=16$, wgh.}
\end{subfigure}
\\[-0.15em]

\caption{Parallel scaling on small synthetic graphs with $n=10^5$. Each row corresponds to one graph and each column to one value of $k$. The lower left corner of the x-axis represents single-core variants.}
\label{fig:parallel-scaling-small-synthetic}
\end{minipage}
\end{figure}

On the small synthetic graphs, parallelization does not pay off. Compared with their single-core executions, all algorithms either achieve only minor improvements or become substantially slower. For graphs of this size, the overhead of parallel execution dominates the computation, independently of the value of $k$.
\subsubsection{Parallelism on Large Synthetic Graphs}
\label{sec:parallellarge}
\begin{figure}
\centering
\begin{minipage}{0.88\linewidth}
\centering
\includegraphics[width=0.9\linewidth]{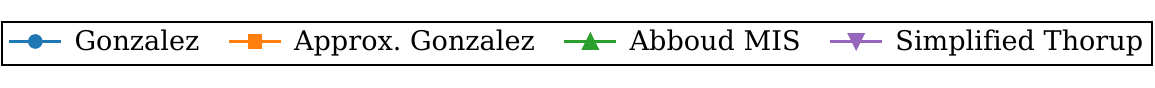}\\

\begin{subfigure}[t]{0.48\linewidth}
\centering
\includegraphics[width=\linewidth]{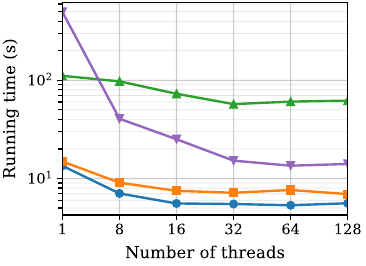}
\caption{$k=200$, $\rho=4$, unw.}
\end{subfigure}
\hfill
\begin{subfigure}[t]{0.48\linewidth}
\centering
\includegraphics[width=\linewidth]{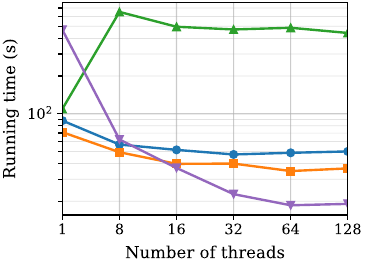}
\caption{$k=2000$, $\rho=4$, unw.}
\end{subfigure}
\\[-0.15em]

\begin{subfigure}[t]{0.48\linewidth}
\centering
\includegraphics[width=\linewidth]{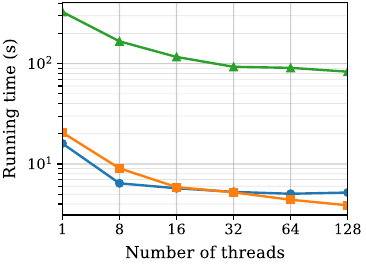}
\caption{$k=200$, $\rho=4$, wgh.}
\end{subfigure}
\hfill
\begin{subfigure}[t]{0.48\linewidth}
\centering
\includegraphics[width=\linewidth]{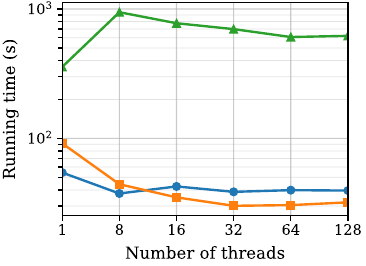}
\caption{$k=2000$, $\rho=4$, wgh.}
\end{subfigure}
\\[-0.15em]

\begin{subfigure}[t]{0.48\linewidth}
\centering
\includegraphics[width=\linewidth]{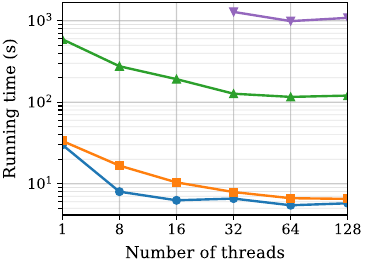}
\caption{$k=200$, $\rho=16$, wgh.}
\end{subfigure}
\hfill
\begin{subfigure}[t]{0.48\linewidth}
\centering
\includegraphics[width=\linewidth]{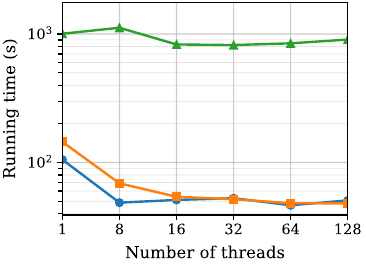}
\caption{$k=2000$, $\rho=16$, wgh.}
\end{subfigure}
\\[-0.15em]

\caption{Parallel scaling on large synthetic graphs with $n=10^7$, using the same layout as Figure~\ref{fig:parallel-scaling-small-synthetic}}.
\label{fig:parallel-scaling-large-synthetic}
\end{minipage}
\end{figure}

On the large synthetic graphs we can observe a different picture: With the exception of Abboud MIS for large values of $k$, all algorithms significantly benefit from parallelism. For the lower value $k=200$, the two Gonzalez algorithms improve their running times by a factor of roughly $2$. Abboud MIS also improves by a factor of $2-5$, depending on the density of the instance. The Simplified Thorup algorithm improves significantly on unweighted graphs. However, on weighted graphs this algorithm struggles a lot, timing out on most runs even on large thread counts and finishing far behind all other algorithms when the runs did finish. We believe that this difference is caused by the interaction between the number of candidate radii, the number of phases performed by the algorithm and the cost of its multi-source SSSP computations. Unweighted instances generally have smaller distance ranges which in turn tends to reduce the number of feasibility checks. Within each feasibility check, the multi-source SSSP computations benefit substantially from parallelism. On weighted graphs, the larger distance range and more expensive capped shortest-path computations cause the additional phases to dominate the running time. All improvements seem to stagnate around $64$ threads. Executions with $128$ threads do not yield significant improvements (if any). This suggests that, beyond this point, synchronization and memory bandwidth overhead outweigh the additional available parallelism.\\

For larger values of $k$ approaching $\sqrt{n}$, the farthest point based algorithms keep their improvements by similar margins. However, Abboud MIS now performs substantially worse on sparse graphs in the parallel execution than the single-core variant. We believe this is due to both the nature of the algorithm as well as our implementation of the binary search: Starting with $r=1$, the algorithm now performs $\sqrt{n}$ many SSSP calls of very low depth, each causing a parallel overhead. Since sparse graphs tend to have a high diameter, many candidate radii are needed, each entailing the same process. On sufficiently dense instances, the explored regions become larger and parallelism begins to pay off again, as shown in Figure~\ref{fig:parallel-scaling-large-synthetic}, subfigure~(f). The simplified Thorup algorithm shows a related trend for larger values of $k$: it benefits significantly from parallelism on unweighted graphs, but the additional feasibility checks and phases on weighted graphs still cause most configurations to time out.
\subsubsection{Small Synthetic Graphs}
\label{sec:ResultsSmall}

\begin{figure}
\centering
\begin{minipage}{0.88\linewidth}
\centering

\includegraphics[width=0.90\linewidth]{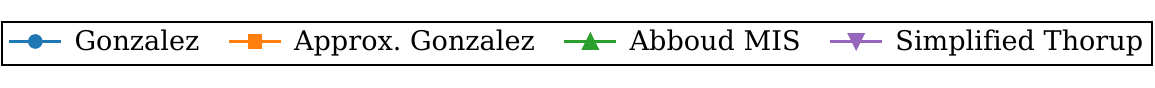}\\[-0.1em]

\begin{subfigure}[t]{0.48\linewidth}
\centering
\includegraphics[width=\linewidth]{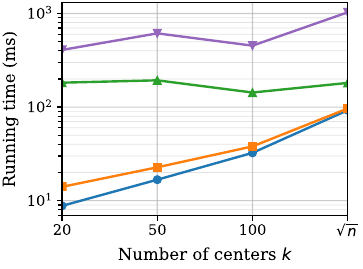}
\caption{$\rho=2$, unw.}
\end{subfigure}
\hfill
\begin{subfigure}[t]{0.48\linewidth}
\centering
\includegraphics[width=\linewidth]{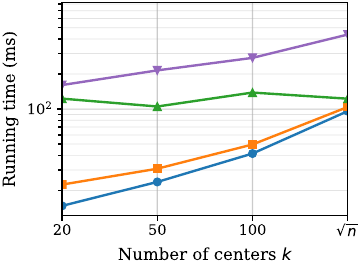}
\caption{$\rho=4$, unw.}
\end{subfigure}

\par\vspace{-0.15em}

\begin{subfigure}[t]{0.48\linewidth}
\centering
\includegraphics[width=\linewidth]{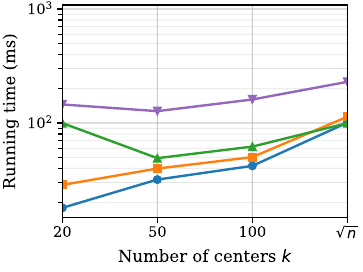}
\caption{$\rho=8$, unw.}
\end{subfigure}
\hfill
\begin{subfigure}[t]{0.48\linewidth}
\centering
\includegraphics[width=\linewidth]{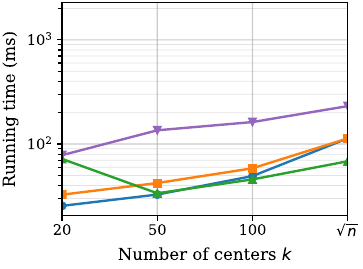}
\caption{$\rho=16$, unw.}
\end{subfigure}

\par\vspace{-0.15em}

\begin{subfigure}[t]{0.48\linewidth}
\centering
\includegraphics[width=\linewidth]{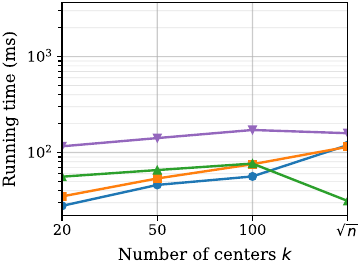}
\caption{$\rho=32$, unw.}
\end{subfigure}
\hfill
\begin{subfigure}[t]{0.48\linewidth}
\centering
\includegraphics[width=\linewidth]{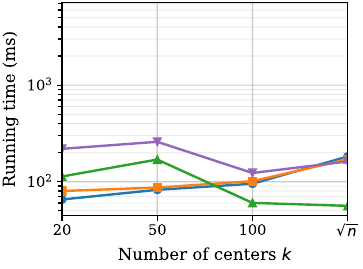}
\caption{$\rho=64$, unw.}
\end{subfigure}

\par\vspace{-0.15em}

\begin{subfigure}[t]{0.48\linewidth}
\centering
\includegraphics[width=\linewidth]{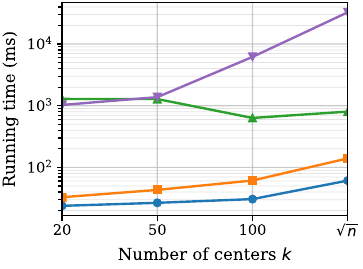}
\caption{$\rho=2$, wgh.}
\end{subfigure}
\hfill
\begin{subfigure}[t]{0.48\linewidth}
\centering
\includegraphics[width=\linewidth]{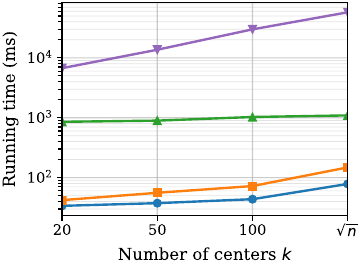}
\caption{$\rho=4$, wgh.}
\end{subfigure}

\par\vspace{-0.15em}

\begin{subfigure}[t]{0.48\linewidth}
\centering
\includegraphics[width=\linewidth]{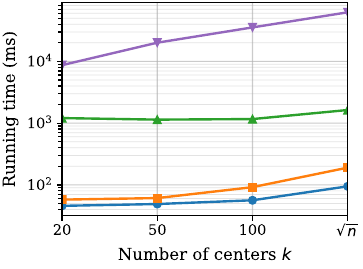}
\caption{$\rho=8$, wgh.}
\end{subfigure}
\hfill
\begin{subfigure}[t]{0.48\linewidth}
\centering
\includegraphics[width=\linewidth]{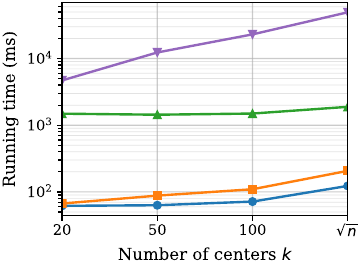}
\caption{$\rho=16$, wgh.}
\end{subfigure}

\par\vspace{-0.15em}

\begin{subfigure}[t]{0.48\linewidth}
\centering
\includegraphics[width=\linewidth]{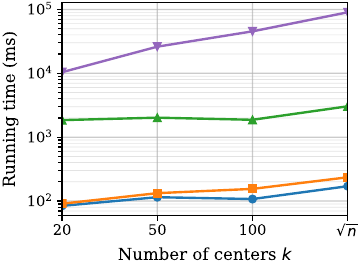}
\caption{$\rho=32$, wgh.}
\end{subfigure}
\hfill
\begin{subfigure}[t]{0.48\linewidth}
\centering
\includegraphics[width=\linewidth]{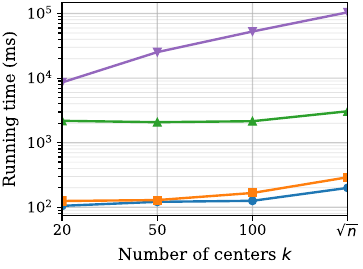}
\caption{$\rho=64$, wgh.}
\end{subfigure}

\caption{Running times on small synthetic graphs of size $n=10^5$. All algorithms were executed in single-core mode only.}
\label{fig:small-synthetic-full-sweep-runtime}

\end{minipage}
\end{figure}

We first analyze the running times on the small generated graphs with $n=10^5$. We consider increasing values of $k$ up to approximately $\sqrt{n}$. Since the preceding experiments showed that parallelism does not pay off at this graph size, all results in Figure~\ref{fig:small-synthetic-full-sweep-runtime} use single-core execution.\\

All algorithms finish quickly on the unweighted instances, with the longest executions taking approximately one second for Simplified Thorup on sparse graphs. As the density increases, the diameter decreases, reducing the number of feasibility checks required by the MIS-based algorithms. Consequently, their running times decrease. For large $k$, Abboud MIS even outperforms both Gonzalez variants, as shown in Figure~\ref{fig:small-synthetic-full-sweep-runtime}, subfigures~(d)--(f).\\

On the weighted instances, the MIS-based algorithms perform worse than the farthest-point-based approaches. This is not surprising, since the larger distance range causes the search for the smallest feasible radius to require more candidate radii. However, the running times of the MIS-based algorithms remain relatively stable as the density increases. This can be explained by two opposing effects. On the one hand, higher density generally reduces the graph diameter and therefore the number of feasibility checks. On the other hand, it increases the work performed by each SSSP computation. For graphs of this size, these two effects appear to approximately balance each other out.

\begin{figure}
\centering
\begin{minipage}{0.88\linewidth}
\centering

\includegraphics[width=0.8\linewidth]{Plots/small_full_sweep_runtime_legend.pdf}\\

\begin{subfigure}[t]{0.48\linewidth}
\centering
\includegraphics[width=\linewidth]{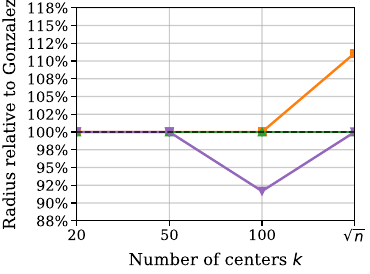}
\caption{$\rho=2$, unw.}
\end{subfigure}
\hfill
\begin{subfigure}[t]{0.48\linewidth}
\centering
\includegraphics[width=\linewidth]{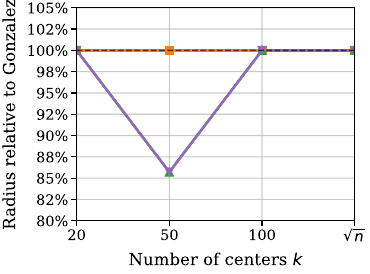}
\caption{$\rho=4$, unw.}
\end{subfigure}

\par\vspace{-0.15em}

\begin{subfigure}[t]{0.48\linewidth}
\centering
\includegraphics[width=\linewidth]{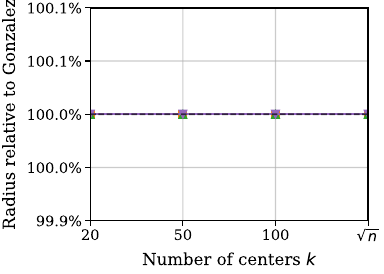}
\caption{$\rho=8$, unw.}
\end{subfigure}
\hfill
\begin{subfigure}[t]{0.48\linewidth}
\centering
\includegraphics[width=\linewidth]{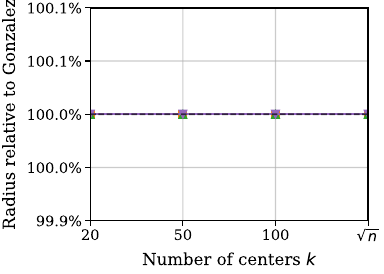}
\caption{$\rho=16$, unw.}
\end{subfigure}

\par\vspace{-0.15em}

\begin{subfigure}[t]{0.48\linewidth}
\centering
\includegraphics[width=\linewidth]{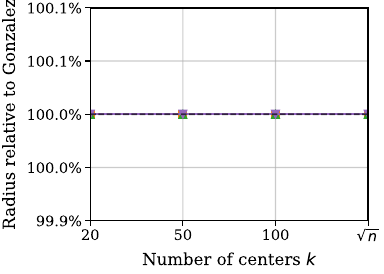}
\caption{$\rho=32$, unw.}
\end{subfigure}
\hfill
\begin{subfigure}[t]{0.48\linewidth}
\centering
\includegraphics[width=\linewidth]{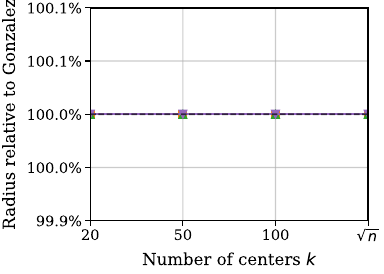}
\caption{$\rho=64$, unw.}
\end{subfigure}

\par\vspace{-0.15em}

\begin{subfigure}[t]{0.48\linewidth}
\centering
\includegraphics[width=\linewidth]{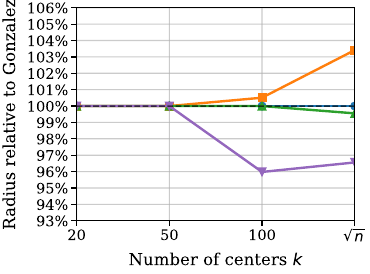}
\caption{$\rho=2$, wgh.}
\end{subfigure}
\hfill
\begin{subfigure}[t]{0.48\linewidth}
\centering
\includegraphics[width=\linewidth]{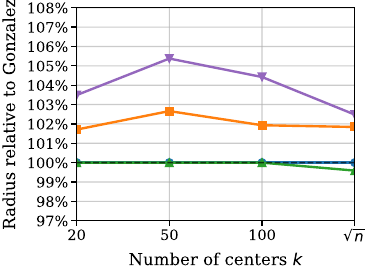}
\caption{$\rho=4$, wgh.}
\end{subfigure}

\par\vspace{-0.15em}

\begin{subfigure}[t]{0.48\linewidth}
\centering
\includegraphics[width=\linewidth]{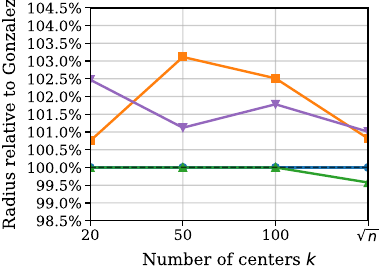}
\caption{$\rho=8$, wgh.}
\end{subfigure}
\hfill
\begin{subfigure}[t]{0.48\linewidth}
\centering
\includegraphics[width=\linewidth]{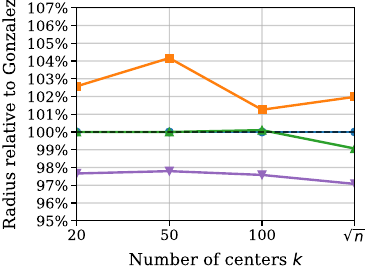}
\caption{$\rho=16$, wgh.}
\end{subfigure}

\par\vspace{-0.15em}

\begin{subfigure}[t]{0.48\linewidth}
\centering
\includegraphics[width=\linewidth]{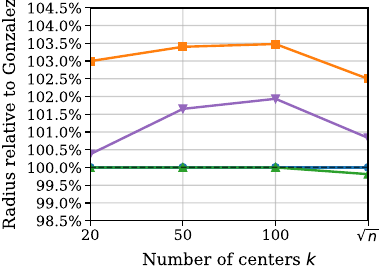}
\caption{$\rho=32$, wgh.}
\end{subfigure}
\hfill
\begin{subfigure}[t]{0.48\linewidth}
\centering
\includegraphics[width=\linewidth]{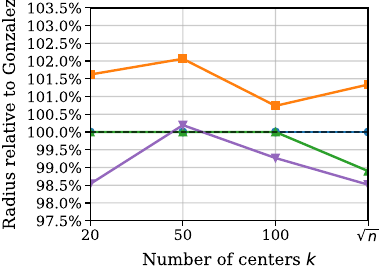}
\caption{$\rho=64$, wgh.}
\end{subfigure}

\caption{Solution quality on small synthetic graphs of size $n=10^5$, measured as the radius returned by each algorithm relative to the radius returned by Gonzalez for the same graph and value of $k$.}
\label{fig:small-synthetic-full-sweep-quality}

\end{minipage}
\end{figure}

For the solution-quality plots, Gonzalez defines the baseline at $100\%$. The $k$-center radius returned by each other algorithm is reported relative to that of Gonzalez for the same graph and value of $k$.\\

On the unweighted instances, the radii returned by the algorithms differ by an absolute value of at most $1$, while the reported radii themselves are typically between $5$ and $7$. The MIS-based algorithms produce a smaller radius than Gonzalez in two configurations, whereas Approximate Gonzalez produces a radius larger by $1$ in one configuration.\\

On the weighted graphs, the same tendency is more visible because the $k$-center radii reported range from approximately $1000$ on the densest instances to around $50000$ on the sparsest instances. Approximate Gonzalez typically returns solutions that are $2$--$5\%$ worse than the Gonzalez baseline. The MIS-based algorithms, in contrast, occasionally improve upon Gonzalez by approximately $2\%$. Recall that Gonzalez is itself only a $2$-approximation algorithm. It is plausible that the randomized MIS-based algorithms therefore occasionally find a better center set.
\subsubsection{Large Synthetic Graphs}
\label{sec:ResultsLarge}

\begin{figure}
\centering
\begin{minipage}{0.88\linewidth}
\centering

\includegraphics[width=0.8\linewidth]{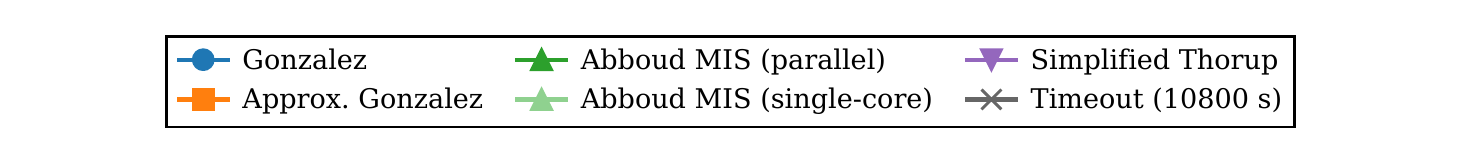}\\[0.1em]

\begin{subfigure}[t]{0.48\linewidth}
\centering
\includegraphics[width=\linewidth]{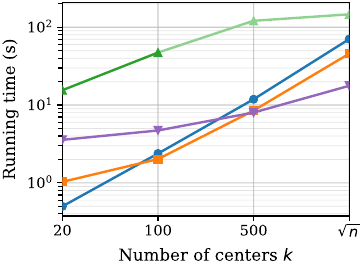}
\caption{$\rho=2$, unw.}
\end{subfigure}
\hfill
\begin{subfigure}[t]{0.48\linewidth}
\centering
\includegraphics[width=\linewidth]{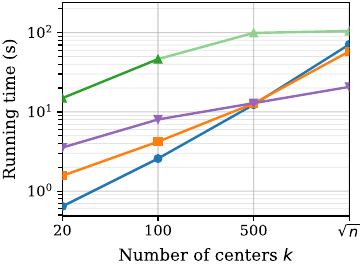}
\caption{$\rho=4$, unw.}
\end{subfigure}

\par\vspace{-0.15em}

\begin{subfigure}[t]{0.48\linewidth}
\centering
\includegraphics[width=\linewidth]{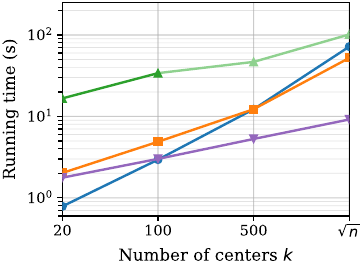}
\caption{$\rho=8$, unw.}
\end{subfigure}
\hfill
\begin{subfigure}[t]{0.48\linewidth}
\centering
\includegraphics[width=\linewidth]{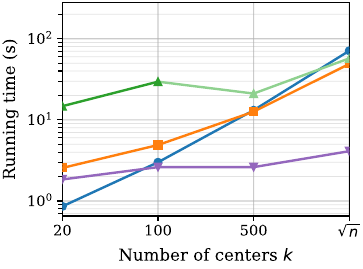}
\caption{$\rho=16$, unw.}
\end{subfigure}

\par\vspace{-0.15em}

\begin{subfigure}[t]{0.48\linewidth}
\centering
\includegraphics[width=\linewidth]{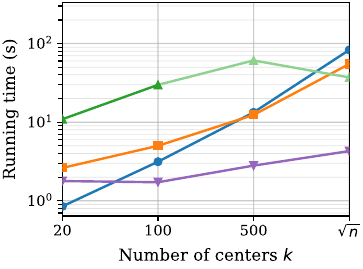}
\caption{$\rho=32$, unw.}
\end{subfigure}
\hfill
\begin{subfigure}[t]{0.48\linewidth}
\centering
\includegraphics[width=\linewidth]{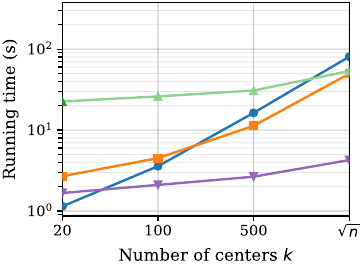}
\caption{$\rho=64$, unw.}
\end{subfigure}

\par\vspace{-0.15em}

\begin{subfigure}[t]{0.48\linewidth}
\centering
\includegraphics[width=\linewidth]{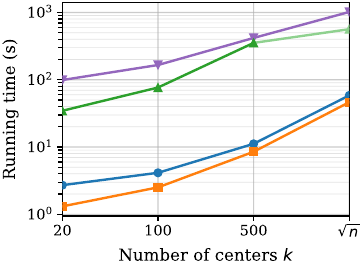}
\caption{$\rho=2$, wgh.}
\end{subfigure}
\hfill
\begin{subfigure}[t]{0.48\linewidth}
\centering
\includegraphics[width=\linewidth]{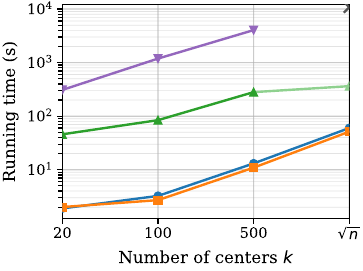}
\caption{$\rho=4$, wgh.}
\end{subfigure}

\par\vspace{-0.15em}

\begin{subfigure}[t]{0.48\linewidth}
\centering
\includegraphics[width=\linewidth]{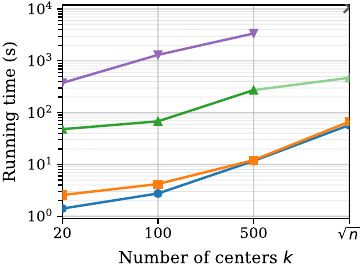}
\caption{$\rho=8$, wgh.}
\end{subfigure}
\hfill
\begin{subfigure}[t]{0.48\linewidth}
\centering
\includegraphics[width=\linewidth]{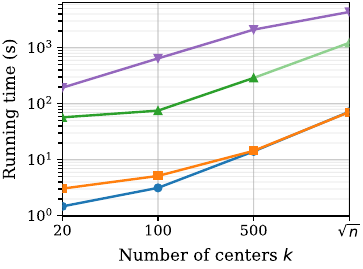}
\caption{$\rho=16$, wgh.}
\end{subfigure}

\par\vspace{-0.15em}

\begin{subfigure}[t]{0.48\linewidth}
\centering
\includegraphics[width=\linewidth]{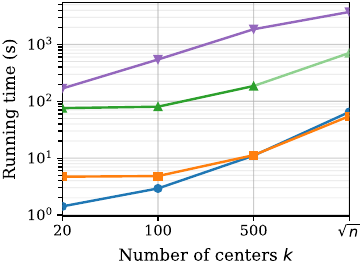}
\caption{$\rho=32$, wgh.}
\end{subfigure}
\hfill
\begin{subfigure}[t]{0.48\linewidth}
\centering
\includegraphics[width=\linewidth]{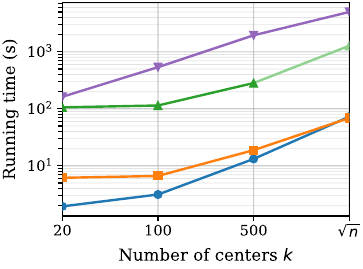}
\caption{$\rho=64$, wgh.}
\end{subfigure}

\caption{Running times on large synthetic graphs with $n=10^7$. Each algorithm uses parallel execution with $64$ threads except Abboud MIS: Each point shows the faster result of the parallel (dark) and single-core (light) runs.}
\label{fig:large-synthetic-full-sweep-runtime}

\end{minipage}
\end{figure}

Our experimental evaluation continues on the large generated instances with $n=10^7$. The preceding scaling experiments showed that parallel execution generally improves the running times on graphs of this size. Abboud MIS is an exception for large $k$, where the overhead of its many sequentially invoked, shallow SSSP computations exploring only small neighborhoods can make the single-core implementation faster. Because of this, for larger graphs, the plots show Abboud MIS' running time performance in parallel mode whenever $k$ is small and in single-core mode once $k$ becomes too large. This is shown in figure \ref{fig:large-synthetic-full-sweep-runtime} by having a dark green dot whenever Abboud MIS is executed in parallel mode and a light green dot when executed in single-core mode.\\

The running-time results are generally consistent with the preceding observations. The farthest-point-based approaches remain relatively close in performance and finish quickly across all instances. The performance of the MIS-based algorithms depends much more on the distance range and diameter of the graph because these determine the number and cost of the feasibility checks. Consequently, the MIS-based algorithms perform particularly well on unweighted graphs, especially at higher densities. For sufficiently large $k$, Simplified Thorup even outperforms all farthest-point-based approaches on the dense unweighted instances, as shown in Figure~\ref{fig:large-synthetic-full-sweep-runtime}, subfigures~(d)--(f).\\

On the weighted instances, the MIS-based algorithms require substantially more feasibility checks and therefore fall behind the farthest-point-based approaches. Among the latter, Gonzalez generally achieves the best running time.

\begin{figure}
\centering
\begin{minipage}{0.88\linewidth}
\centering

\includegraphics[width=0.90\linewidth]{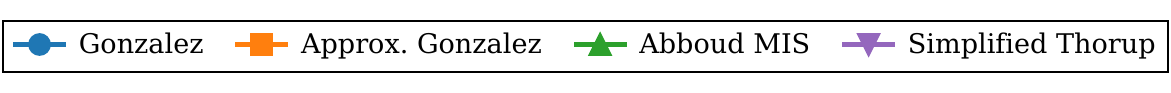}\\[0.2em]

\begin{subfigure}[t]{0.48\linewidth}
\centering
\includegraphics[width=\linewidth]{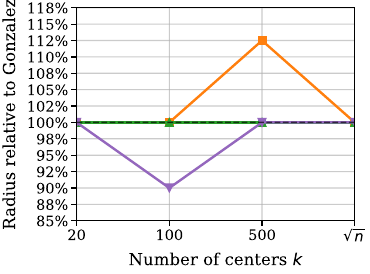}
\caption{$\rho=4$, unw.}
\end{subfigure}
\hfill
\begin{subfigure}[t]{0.48\linewidth}
\centering
\includegraphics[width=\linewidth]{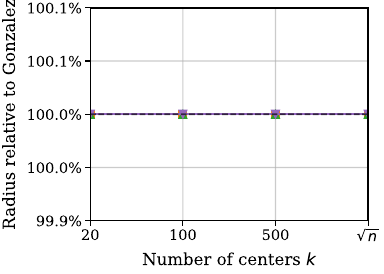}
\caption{$\rho=8$, unw.}
\end{subfigure}

\par\vspace{-0.15em}

\begin{subfigure}[t]{0.48\linewidth}
\centering
\includegraphics[width=\linewidth]{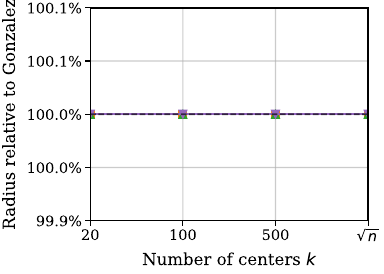}
\caption{$\rho=16$, unw.}
\end{subfigure}
\hfill
\begin{subfigure}[t]{0.48\linewidth}
\centering
\includegraphics[width=\linewidth]{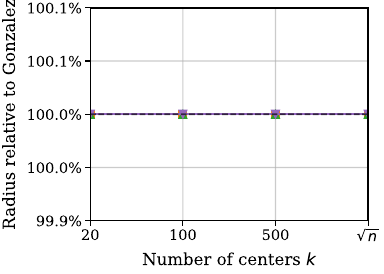}
\caption{$\rho=32$, unw.}
\end{subfigure}

\par\vspace{-0.15em}

\begin{subfigure}[t]{0.48\linewidth}
\centering
\includegraphics[width=\linewidth]{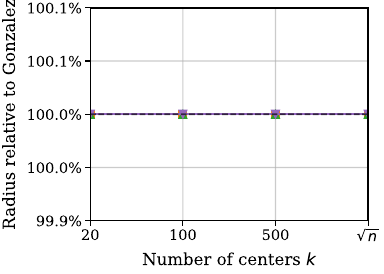}
\caption{$\rho=64$, unw.}
\end{subfigure}
\hfill
\begin{subfigure}[t]{0.48\linewidth}
\centering
\includegraphics[width=\linewidth]{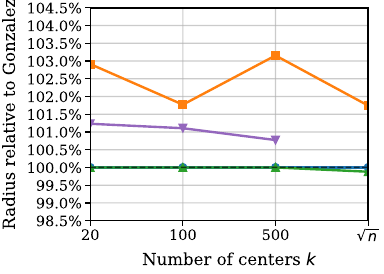}
\caption{$\rho=4$, wgh.}
\end{subfigure}

\par\vspace{-0.15em}

\begin{subfigure}[t]{0.48\linewidth}
\centering
\includegraphics[width=\linewidth]{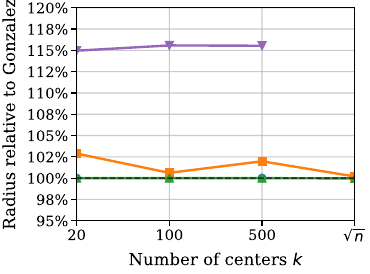}
\caption{$\rho=8$, wgh.}
\end{subfigure}
\hfill
\begin{subfigure}[t]{0.48\linewidth}
\centering
\includegraphics[width=\linewidth]{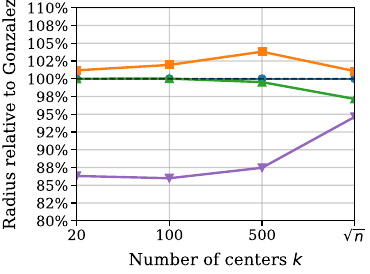}
\caption{$\rho=16$, wgh.}
\end{subfigure}

\par\vspace{-0.15em}

\begin{subfigure}[t]{0.48\linewidth}
\centering
\includegraphics[width=\linewidth]{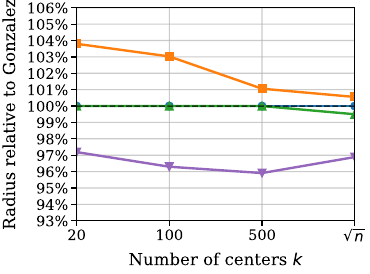}
\caption{$\rho=32$, wgh.}
\end{subfigure}
\hfill
\begin{subfigure}[t]{0.48\linewidth}
\centering
\includegraphics[width=\linewidth]{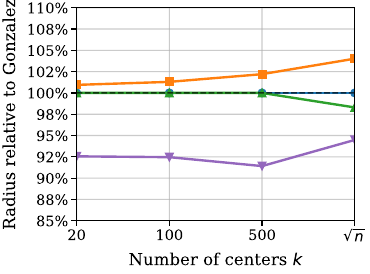}
\caption{$\rho=64$, wgh.}
\end{subfigure}

\caption{Solution quality on large synthetic graphs with $n=10^7$, measured as the radius returned by each algorithm relative to the radius returned by Gonzalez for the same graph and value of $k$. Missing points correspond to configurations that reached the timeout.}
\label{fig:large-synthetic-full-sweep-quality}

\end{minipage}
\end{figure}

The solution quality plots also confirm the trends shown on the small synthetic graphs. The reason the $\rho=2$ instances are omitted in Figure \ref{fig:large-synthetic-full-sweep-quality} is because the graphs generated with this density contained too many connected components and thus all algorithms reported a radius of $\infty$. On unweighted instances of higher density, all algorithms are very close and differ by an absolute value of at most $1$ as the $k$-center radii returned by the algorithms on these instances range approximately from $3$ to $12$.\\

On the weighted instances, Approximate Gonzalez consistently reports worse radii than Gonzalez. Simplified Thorup also performs worse on sparse weighted instances, but its relative solution quality improves as the density increases. On the dense instances, it improves upon the Gonzalez baseline by up to $13\%$ as shown in Figure~\ref{fig:large-synthetic-full-sweep-quality}, subfigure~(h). For sufficiently large $k$, Abboud MIS also reports radii approximately $2\%$ smaller than Gonzalez. Across these instances, the reported $k$-center radii range from approximately $1800$ to $50000$.
\subsection{Social Networks Solution Quality}
\label{sec:socQual}
\begin{figure}
\centering
\begin{minipage}{0.88\linewidth}
\centering

\includegraphics[width=0.9\linewidth]{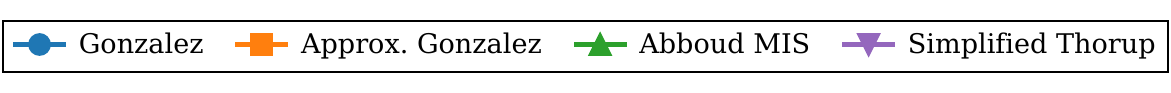}\\[-0.1em]

\begin{subfigure}[t]{0.48\linewidth}
\centering
\includegraphics[width=\linewidth]{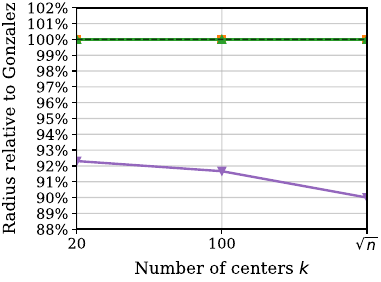}
\caption{DBLP}
\end{subfigure}
\hfill
\begin{subfigure}[t]{0.48\linewidth}
\centering
\includegraphics[width=\linewidth]{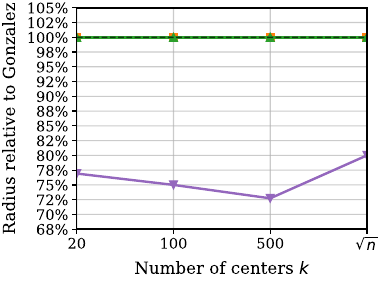}
\caption{YouTube}
\end{subfigure}

\par\vspace{-0.15em}

\begin{subfigure}[t]{0.48\linewidth}
\centering
\includegraphics[width=\linewidth]{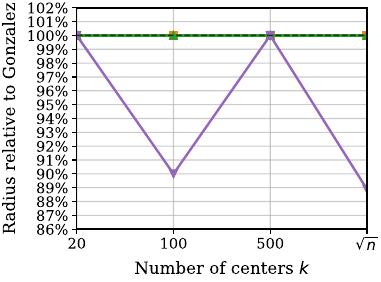}
\caption{LiveJournal}
\end{subfigure}
\hfill
\begin{subfigure}[t]{0.48\linewidth}
\centering
\includegraphics[width=\linewidth]{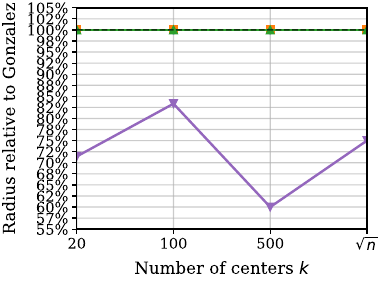}
\caption{Orkut}
\end{subfigure}

\par\vspace{-0.15em}

\begin{subfigure}[t]{0.48\linewidth}
\centering
\includegraphics[width=\linewidth]{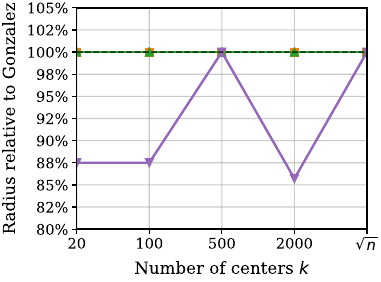}
\caption{Twitter}
\end{subfigure}
\hfill
\begin{subfigure}[t]{0.48\linewidth}
\centering
\includegraphics[width=\linewidth]{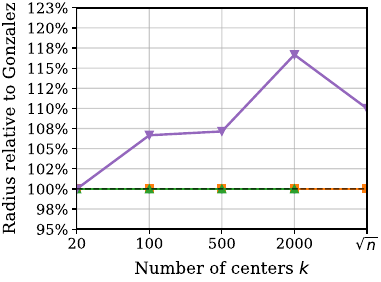}
\caption{Friendster}
\end{subfigure}

\caption{Solution quality on the six social-network instances. The $k$-center radius returned by each algorithm is reported relative to the one reported by Gonzalez for the same graph and value of $k$. The dashed line marks equality with Gonzalez. Missing points correspond to configurations that reached the timeout.}
\label{fig:social-full-sweep-quality}

\end{minipage}
\end{figure}

The quality of solution plots show that for these instances, all algorithms reported identical radii for all values of $k$ except the Simplified Thorup algorithm. Although the relative differences for Simplified Thorup appear large, they should be interpreted in the context of the small absolute radius on these unweighted, low-diameter graphs. Even a difference of two in the reported radius can correspond to a relative improvement of $40\%$. For example, Figure~\ref{fig:social-full-sweep-quality} subfigure~(d) shows that for $k = 500$, Simplified Thorup reports a $k$-center solution of only roughly $60\%$ of the Gonzalez baseline. In absolute terms, however, Simplified Thorup reported a radius of $3$ here vs. the reported Gonzalez radius of $5$.
\subsubsection{Other Networks}
\label{sec:ResultsOther}

\begin{figure}
\centering
\begin{minipage}{0.88\linewidth}
\centering
\includegraphics[width=0.9\linewidth]{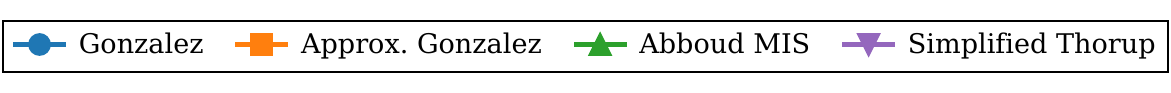}\\[-0.1em]

\begin{subfigure}[t]{0.48\linewidth}
\centering
\includegraphics[width=\linewidth]{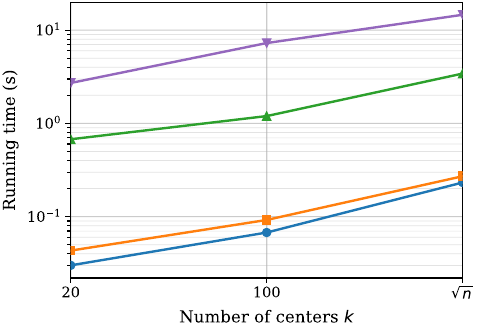}
\caption{libimseti}
\end{subfigure}
\hfill
\begin{subfigure}[t]{0.48\linewidth}
\centering
\includegraphics[width=\linewidth]{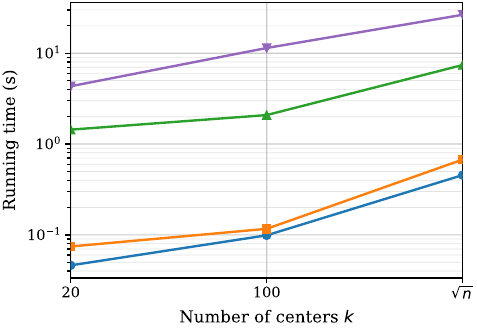}
\caption{MovieLens}
\end{subfigure}

\par\vspace{-0.15em}

\makebox[\linewidth][c]{%
\begin{subfigure}[t]{0.48\linewidth}
\centering
\includegraphics[width=\linewidth]{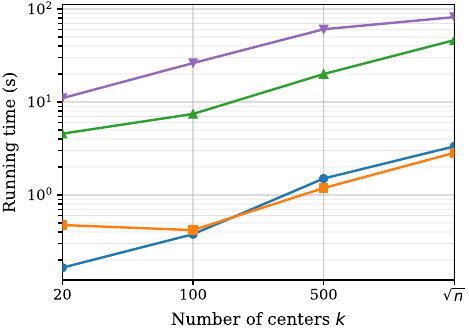}
\caption{Yahoo! Song}
\end{subfigure}
}

\caption{Running times on the three rating graphs. All algorithms use parallel execution with $64$ threads.}
\label{fig:other-full-sweep-runtime}
\end{minipage}
\end{figure}

Lastly, we evaluate our algorithms on three additional real world weighted networks. Unlike the road networks, these graphs are extremely dense and their edge weights originate from transformed ratings, resulting in relatively small edge weights. We observe that in terms of running time, the farthest-point algorithms again come out on top. Although these graphs are very dense, which generally favors the MIS-based approaches, their weighted nature increases the number of candidate radii that must be tested during the feasibility search which seems to offset the benefit from the increased density.

\begin{figure}
\centering
\begin{minipage}{0.88\linewidth}
\centering
\includegraphics[width=0.9\linewidth]{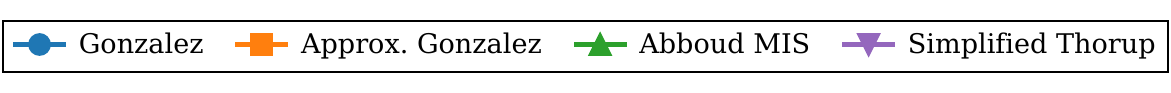}\\[-0.1em]

\begin{subfigure}[t]{0.48\linewidth}
\centering
\includegraphics[width=\linewidth]{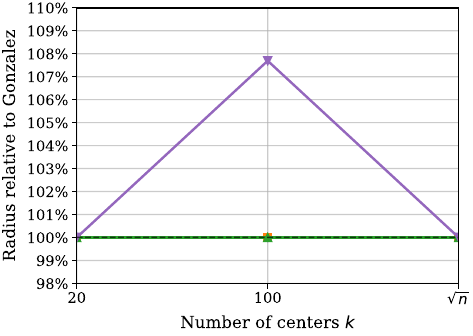}
\caption{libimseti}
\end{subfigure}
\hfill
\begin{subfigure}[t]{0.48\linewidth}
\centering
\includegraphics[width=\linewidth]{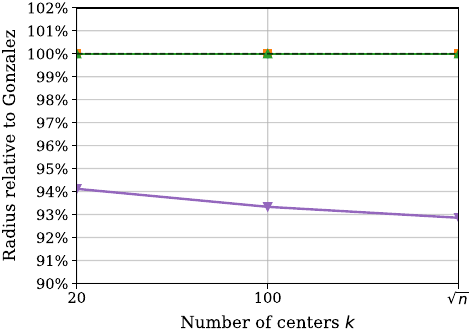}
\caption{MovieLens}
\end{subfigure}

\par\vspace{-0.15em}

\makebox[\linewidth][c]{%
\begin{subfigure}[t]{0.48\linewidth}
\centering
\includegraphics[width=\linewidth]{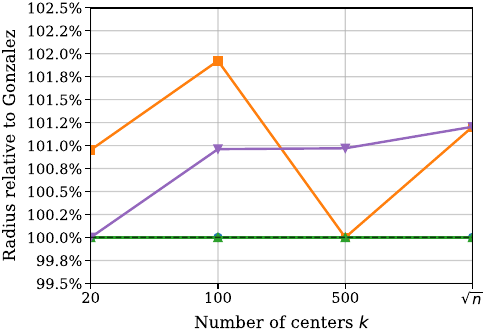}
\caption{Yahoo! Song}
\end{subfigure}
}

\caption{Solution quality on the three additional network instances. The $k$-center solution returned by each algorithm is reported relative to the solution reported by Gonzalez for the same graph and value of $k$.}
\label{fig:other-full-sweep-quality}
\end{minipage}
\end{figure}

The solution qualities of all algorithms are also very similar on these
instances. In absolute terms, the reported radii on libimseti and MovieLens range from approximately $13$ to $17$, while on Yahoo! Song they range from roughly $85$ to $120$ for the different values of $k$. This also yields that the differences shown in Figure~\ref{fig:other-full-sweep-quality} correspond to small absolute differences. For example, the $8\%$ gap between Simplified Thorup and Gonzalez in Figure~\ref{fig:other-full-sweep-quality}, subfigure~(a), for $k=100$ corresponds to an absolute radius difference of only $1$.

\bibliographystyle{siamplain}
\bibliography{references}
\end{document}